\documentclass[final,onefignum,onetabnum]{siamart190516}

\usepackage{graphicx}
\usepackage{dcolumn}
\usepackage{bm}
\usepackage{color,xcolor}
\usepackage{bm}
\usepackage{url}
\usepackage{mathtools}
\usepackage{float}
\usepackage{braket}
\usepackage{amsmath}	
\usepackage{amssymb}	
\usepackage{algorithmic}	

\usepackage{cite}

\newtheorem{thm}{Theorem}[section] 
\newtheorem{remark}[thm]{Remark}



\DeclareMathOperator{\Tr}{Tr}



\title{Ranking edges  by their impact on the spectral complexity of information diffusion over networks\thanks{JK and DT were supported in part by the National Science Foundation (DMS-2052720 and EDT-1551069) and the Simons Foundation (grant \#578333). PJM was supported in part by the National Science Foundation (BCS-2140024).
}}

\author{
Jeremy Kazimer\thanks{Department of Mathematics, University at Buffalo, State University of New York (SUNY), Buffalo, NY 14260, USA}
\and
Manlio De Domenico\thanks{CoMuNe Lab, Department of Physics and Astronomy "Galileo Galilei", University of Padua, 35131, Italy}
\and
Peter J. Mucha\thanks{Department of Mathematics, Dartmouth College, Hanover, NH, 03755, USA}
\and
Dane Taylor\thanks{{School of Computing \& Department of Mathematics and Statistics, University of Wyoming, Laramie, WY 82072, USA and} Department of Mathematics, University at Buffalo, State University of New York (SUNY), Buffalo, NY 14260, USA {(dane.taylor@uwyo.edu)}}
}

\pagestyle{myheadings}
\thispagestyle{plain}
\markboth{J. Kazimer \emph{et al.}}
{Ranking edges by their impact on information diffusion complexity}

\begin{document}
\maketitle


\begin{abstract}
Despite the numerous ways now available to quantify which parts or subsystems of a network are most important, there remains a lack of  centrality measures that  are related to the complexity of information flows and   are derived directly from entropy measures. Here, we introduce a ranking of  edges based on how each edge's removal would change a system's von Neumann entropy (VNE),  which is a spectral-entropy measure that has been adapted from quantum information theory  to quantify the complexity of information  dynamics over networks. 
{We show that a direct calculation of such rankings is computationally inefficient (or unfeasible) for large networks,  since the possible removal of $M$ edges requires that one  computes all the eigenvalues of  $M$ distinct matrices.} 
To overcome this limitation, we employ spectral perturbation theory to estimate VNE perturbations and derive an approximate edge-ranking algorithm that is {accurate and has a computational complexity that scales   as $\mathcal{O}(MN)$ for networks with $N$ nodes.}
Focusing on a form of VNE that is associated with a transport operator $e^{-\beta{  L}}$, where  ${  L}$ is a graph Laplacian matrix and $\beta>0$ is a diffusion timescale parameter, we apply this approach to diverse applications including a network encoding polarized voting patterns of the 117th U.S. Senate, a multimodal transportation system including roads and metro lines in London, and a multiplex brain network encoding correlated human brain activity. Our experiments highlight situations where the edges that are considered to be most important for  information diffusion complexity can dramatically change as one considers short, intermediate and long  timescales $\beta$ for diffusion.
%
\end{abstract}


\begin{keywords}
network science,
perturbation theory,
information diffusion,
spectral complexity, 
von Neumann entropy,
edge ranking
\end{keywords}

\begin{AMS}
94A17, 05C82, 60J60, 28D20, 68P30
\end{AMS}

\section{Introduction}\label{sec:Intro}

Centrality analysis \cite{borgatti2005centrality,newman2010networks} --- i.e., determining the importance of substructures including nodes
\cite{brandes2001faster,estrada2009communicability,gleich2015pagerank}, edges \cite{girvan2002community,restrepo2006,tong2012gelling,zhang2017edgecentrality}, and subgraphs \cite{estrada2005subgraphs,taylor2011network} --- is a fundamental pursuit of network science. Quantifying their importance supports diverse applications ranging from the ranking of webpages \cite{brin1998anatomy,kleinberg2011authoritative}, athletes/athletic teams \cite{saavedra2010mutually,callaghan2003random,park2005network}, and academics/academic institutions \cite{schmidt2007ranking,ding2009pagerank,clauset2015systematic,taylor2017eigenvector}, to the identification of potential intervention targets for dynamical processes over networks: examples include
congestion points for transportation systems \cite{holme2003congestion,guimera2005worldwide}, 
influencers in social networks \cite{kempe2003maximizing,kitsak2010identification,lehmann2018complex},
vaccination strategies  \cite{shaw2010enhanced,rushmore2014network}, as well as
points of fragility for  critical infrastructures \cite{hines2008centrality,restrepo2006} and biological networks \cite{jeong2001lethality,allesina2009googling}. Given the broad impact of this interdisciplinary field, new methodological advances continue to be developed including the extension of centralities to generalized types of networks (e.g., temporal networks \cite{pan2011path,taylor2017eigenvector}, multilayer and multiplex networks \cite{de2015structural,de2015ranking,sole2016random,tudisco2018node,deford2018new,taylor2019tunable}, hypergraphs and simplicial complexes \cite{estrada2018centralities,benson2019three,tudisco2021node}), comparing various centralities for specific applications \cite{jordan2007quantifying,takes2016centrality}, deriving centralities that cater  to particular dynamical systems \cite{restrepo2006,taylor2016synchronization}, and building stronger theoretical foundations using tools from  statistics \cite{de2018physical}, mathematics \cite{estrada2010network}, and machine learning \cite{grando2018machine}.

Herein, we aim to further align the study of centrality with information theory and entropy measures, and in particular, to contribute to the growing  interface between network science and quantum information theory \cite{passerini2008neumann,anand2011shannon,de2015structural,de2016spectral,li2018network,ghavasieh2020enhancing,ghavasieh2020statistical}.  For a recent overview of the topic, we refer the reader to \cite{ghavasieh2022statistical}.  Our approach is partly motivated by node entanglement    \cite{ghavasieh2021unraveling}, which was recently proposed as a node centrality measure stemming
from the study of a network's von Neumann entropy (VNE) \cite{neumann2013mathematische}. 
VNE was originally developed as an information-theoretic measure to quantify disorder within quantum-mechanical systems, and it has played a crucial role in the development of quantum information theory \cite{neumann2013mathematische,wilde2013quantum}. More recently, it has been extended as a spectral-entropy measure that quantifies the structural  complexity of graphs  \cite{braunstein2006laplacian,de2015structural,li2018network}, has been generalized to preserve the sub-additivity property  \cite{de2016spectral} {as later clarified in Sec.~\eqref{sec:VNE}}, and has been applied to a variety of interconnected systems, from molecular biology \cite{ghavasieh2021multiscale} to neuroscience \cite{benigni2021persistence}. 
Of particular relevance is previous research showing that VNE is a natural measure to   quantify the   complexity of the interplay between structure and information diffusion over a network \cite{ghavasieh2020statistical,ghavasieh2022statistical}.
We also note in passing that there exist other  centrality measures \cite{nikolaev2015centrality,chellappan2013wireless,qiao2018entropy,li2020entropy,tutzauer2009entropybasedca, tutzauer2007pathtransfer} that also utilize entropy in some way, but which are not derived from VNE  nor   relate to the complexity of  information diffusion.
{Generally speaking, the literature relating edge centrality measures to entropy is lacking, and in particular,   no centrality measure for edges  has been previously derived using von Neumann entropy.}



In this work, we develop
a  VNE-based measure for edge importance.
(Although our methods  easily extend to nodes and subgraphs, we will not explicitly use them so here.)
%
%
We develop a framework that ranks edges based on how VNE would change if each edge were separately removed. %
We interpret these rankings from the perspective of information diffusion over networks \cite{ghavasieh2020statistical,ghavasieh2022statistical} so that the node rankings are directly associated with the spectral complexity of information diffusion. The top-ranked edge is the one whose removal would most increase VNE, thereby most increasing the spectral complexity of information diffusion.
As such, our approach complements the numerous existing centralities that relate  to information-spreading dynamics over   networks  including betweenness \cite{freeman1977set}, Katz centrality \cite{katz1953new}, communicability \cite{estrada2012physics,estrada2009communicability}, and many others \cite{estrada2010network,kempe2003maximizing,kitsak2010identification,lehmann2018complex,bertagnolli2020network}. These existing methods typically aim to identify nodes/edges that are important as either sources for broadcasting information or are crucial bottlenecks for information pathways, and as such, they do not provide a notion of centrality as it relates to the complexity/heterogeneity of \emph{information diffusion}.
Our work highlights how the extension of some concepts used in quantum information theory to network science, and centrality analysis in particular, can provide complementary insights toward the importance of network substructures (e.g., edges) with respect to a system's functional complexity and how their roles change as one considers different   timescales for information diffusion.
%

Our methodology is also motivated by prior centrality analyses \cite{restrepo2006,milanese2010approximating,taylor2011network,tong2012gelling} that utilize spectral perturbation theory to more efficiently estimate the impact of node, edge, or subgraph removals on the perturbed spectral properties of networks (or functions thereof \cite{taylor2016synchronization}).
Specifically, as a spectral-entropy measure, the computation of  VNE requires calculation of the full set of eigenvalues for a graph's Laplacian matrix. (See Sec.~\ref{sec:Back} for our general definition.) {Computing all the eigenvalues of a large matrix is well-known to be computationally expensive, and so} naively recomputing the eigenvalues to obtain the new VNE that occurs after an edge removal is computationally {demanding} and is infeasible for large networks. {(See Sec.~\ref{sec:runtime} for numerical evidence and a discussion.)}
Thus motivated, we employ spectral perturbation theory to efficiently approximate   changes to VNE and approximate the subsequent ranking of edges based on those  perturbations. We show that the computational complexity of the approximate rankings 
%
{in the limit of large $N$} reduces to {at most} $\mathcal{O}(N)$ per edge, allowing the framework to effectively scale to large networks.  

We investigate the true and approximate  rankings of edges in terms of VNE change for three empirical networks: 
a voting similarity network for the 117th U.S. Senate, 
a multimodal transportation network encoding roads and metro lines, and 
a multiplex brain network with two layers that represent correlated brain activity at different frequency bands.
We focus on a version of VNE based on a transport operator $e^{-\beta{  L}}$, where  ${  L}$ is a graph Laplacian matrix and $\beta>0$ is a  \emph{diffusion timescale parameter}.
Interestingly, we find that our measure for a network's most important edges is significantly influenced by $\beta$
in that the top-ranked edges at a short timescale are often very different from those at an intermediate and/or long timescale. This result is interesting because it highlights the multiscale nature of importance for fundamental units of a network. 
Our numerical experiments reveal three different types of network structures that systematically cause such a reordering effect:
community structure in the U.S. Senate voting network;
the relative speed between roads and metro lines for the transportation network; 
 and the strength of coupling between network layers for the multiplex brain network.
These experiments also highlight the potential broad utility of our proposed entropy-based rankings for diverse applications across the social, physical, and biological sciences. 

The remainder of this paper is organized as follows:
we present background information in Sec.~\ref{sec:Back},
methodology in Sec.~\ref{sec:results},
numerical experiments in Sec.~\ref{sec:emp_exp}, and
a discussion in Sec.~\ref{sec:Discuss}.

\section{Background information}\label{sec:Back}

We first provide  background information Laplacian matrices and their spectral properties (Sec.~\ref{sec:notation}), VNE as a measure of spectral complexity for information diffusion  (Sec.~\ref{sec:VNE}), and prior centrality measures that stem from spectral   perturbation theory (Sec.~\ref{sec:pert0}).

\subsection{Laplacian matrices and  the diffusion equation}\label{sec:notation}
%
Let {${G}(\mathcal{V},\mathcal{E})$} be a graph with a set $\mathcal{V}=\{1,\dots,N\}$  of nodes and a set {{$\mathcal{E} \subset\mathcal{V}\times\mathcal{V}$}} of undirected edges, each of which can have a positive weight $W_{ij}>0$.
We assume that there are no self-edges. The graph can be equivalently defined by a symmetric adjacency matrix {$A$ with entries $A_{ij}=W_{ij}$ if $(i,j)\in\mathcal{E}$ and $A_{ij}=0$ otherwise.} 
%
We define  $M=\frac{1}{2}\sum_{ij} A_{ij}$ as the total weight of edges. (The factor of $1/2$ arises since each edge appears  twice in the matrix $A$.) For unweighted graphs, $A_{ij} =A_{ji}=1$ for each edge $(i,j)$ so that $M$ equals the number of edges. 
We further define  $D=\mbox{diag}[d_1,\dots,d_N]$, where each $d_i = \sum_j A_{ij}$ encodes the weighted degree (i.e., strength) of node $i$. The unnormalized  Laplacian matrix {(also called \emph{combinatorial Laplacian})}  is given by a size-$N$ square matrix
\begin{equation} \label{eq:Laplacian}
L = D - A.
\end{equation}
The matrix $L$ is important to many diverse applications including graph partitioning \cite{fiedler1989laplacian}, analysis of spanning trees  \cite{maurer1976matrix}, synchronization of nonlinear dynamical systems \cite{pecora1998master,skardal2014optimal,taylor2016synchronization}, electricity flow \cite{doyle1984random}
manifold learning \cite{belkin2001laplacian,coifman2006diffusion},  harmonic analysis \cite{coifman2006harmonic}, graph sparsification \cite{spielman2011graph}, and neural networks \cite{bronstein2017geometric,zhu2021contrastive}. 
Of particular relevance, one can define the diffusion equation over a graph by 
\begin{equation} \label{eq:heat}
\frac{d}{dt} {\bf x}(t) = -\hat{\beta} L {\bf x}(t),
\end{equation}
where ${\bf x}(t)=[x_1(t),\dots,x_N(t)]$, ${ x}_i(t)$ is the density at node $i$ at time $t$, and  $\hat{\beta}>0$ is a diffusion rate.

Theory development for many Laplacian-matrix-related applications   often stems from studying the spectral properties (i.e., eigenvalues and eigenvectors) of $L$.
As a real symmetric matrix, $L$ is diagonalizable,  $L=U\Lambda U^T$, where   $\Lambda = \mbox{diag}[\lambda_{1},\dots,\lambda_{N}]$ is a diagonal matrix whose diagonal entries encode the eigenvalues $0=\lambda_1 < \lambda_2\le \dots \le\lambda_{N}$, and their corresponding eigenvectors, ${\bf u}^{(k)}$, make up the orthonormal columns of  $U=[{\bf u}^{(1)},\dots,{\bf u}^{(N)}]$. 
We assume that the network is connected (i.e., $L$ is irreducible) so that $\lambda_2>0$.
By construction, $L{\bf 1}=0$, where ${\bf 1}=[1,\dots,1]^T$, which implies $\lambda_1=0$ is an eigenvalue and   ${\bf u}^{(1)}= N^{-1/2}{\bf 1}$ is its associated normalized eigenvector. To  facilitate later analyses, we also define the vector of eigenvalues: $\vec{\lambda} = [\lambda_{1},\dots,\lambda_{N}]$.  
%
Using a spectral decomposition, Eq.~\eqref{eq:heat}  has the general  solution 
\begin{equation} \label{eq:heat_sol}
{\bf x}(t) = e^{-\hat{\beta} t L}{\bf x}(0) =
Ue^{-\hat{\beta}  t \Lambda}U^{T}{\bf x}(0)
= \sum_{k=1}^N \alpha_{k} e^{-  \hat{\beta} t \lambda_k}  {\bf u}^{(k)}, 
\end{equation}
where $\alpha_{k} =   \langle {\bf u}^{(k)} ,{\bf x}(0)\rangle$ is the projection of the initial condition onto the $k$th eigenvector (each of which is an invariant subspace of the diffusion dynamics). Herein, we will often combine the diffusion rate and time parameter into a single \emph{timescale parameter} $\beta=\hat{\beta}t$.

Before continuing, we highlight that 
there exist   normalized versions of Laplacians that are also widely studied.  For example, $\tilde{L} = D^{-1}L$ is an asymmetric normalized Laplacian, and under the substitution $\tilde{L}\mapsto L$, Eq.~\eqref{eq:heat_sol} would describe a continuous-time Markov chain \cite{freedman2012markov}. Similarly, $\hat{L} = D^{-1/2}LD^{-1/2}$ defines a symmetric normalized Laplacian that shares the same eigenvalues as $\tilde{L}$ and is broadly important for machine learning applications. In this work, we will focus on $L$, noting that our  methodology and findings can be easily extended to applications involving these other matrices.

\subsection{Information diffusion and von Neumann entropy (VNE)}\label{sec:VNE}
While the diffusion equation given by Eq.~\eqref{eq:heat} is often associated with heat flow, it can also be used as a model for other types of  diffusion including the propagation of quanta of information over complex networks \cite{ghavasieh2020enhancing, ghavasieh2021unraveling, benigni2021persistence}. 
We begin by using $\beta=\hat{\beta}t$ to {define 
$Z_\beta( L) \equiv \text{Trace}(e^{-\beta{L}}) = \sum^N_{k = 1} e^{-\beta\lambda_k},$
%
%
%
which} normalizes a \emph{propagator matrix} $e^{-\beta{L}}$   
to yield an associated \emph{density matrix}
\begin{align} \label{eq:expo22}
\rho_\beta( L) &=  \frac{e^{-\beta L}} 
{Z_\beta( L)} 
= \sum_{k=1}^N s_k  {({\bf u}^{(k)})^T({\bf u}^{(k)})},
\end{align}
where 
{$s_k  = {e^{-\beta \lambda_k}}/{Z_\beta( L )}.$}
These expressions use the eigenvalues $\lambda_k$ and eigenvectors ${\bf u}^{(k)}$ of $L$, which were defined in the previous section.

We note that in the classical interpretation of diffusion, if a particle is first observed at node $i$ and later observed at node $j$, then it is believed to have taken some well-defined (although possibly unknown) path from $i$ to $j$. In contrast,  the quantum transport of a quantum particle that is emitted at node $i$ is considered to be de-localized (i.e., exists across all paths) until it is later detected at   node $j$.  
In either case, one can 
interpret a vector ${{\bf x}} (t) $ to be an \emph{information field} at timescale $t\ge 0$ for information dynamics that are initialized with an initial field ${{\bf x}} (0) =  \sum_{i=1}^N p_i{{\bf e}_i}$, where $p_i$ is the probability that information is ``seeded'' at node $i$ and ${{\bf e}_i}$ is a unit vector in which all entries are zeros except for the $i$th entry, which is a one.  We assume $\sum_i p_i=1$.
Each entry   $[\rho_\beta(  L)]_{ij}$ gives the probability of transport of information quanta from node $i$ to $j$ at  timescale $\beta$. 
Moreover, each eigenspace-restricted transport operator $s_k  {({\bf u}^{(k)})^T({\bf u}^{(k)})}$
defines a $k$th  \emph{information stream} over which an information quanta can be transported. By construction, $\text{Trace}(\rho_\beta( L)) = \sum_k s_k  = 1$ so that each $s_k$ defines the probability of transport along the $k$th information stream at timescale $\beta$. 
For both classical and quantum diffusion it is natural to examine whether transport is localized onto a small set of information streams (i.e., a few eigenmodes) or whether it is dispersed across many information streams (i.e., many eigenmodes). Such an inquiry is said to examine the \emph{spectral complexity} of information diffusion.
Here, we focus on classical diffusion according to Eq.~\eqref{eq:heat}  but seek to utilize the toolset of quantum information theory to measure the spectral complexity  of information diffusion, and we will later use it to measure the importance (i.e., centrality) of edges.
To this end, we quantify the spectral complexity of information diffusion using von Neumann entropy (VNE).
VNE was introduced by John von Neumann as a measure for quantum information \cite{neumann2013mathematische} and can quantify, for example, the departure of a quantum-mechanical system from its pure state. Applications include  using VNE to identify entangled spin-orbital bound states \cite{you2012orbitals}, examining non-equilibrium thermodynamics of bosons \cite{mendes2020bosons}, and developing complementary analyses for quantum phenomena \cite{holik2015natural}. Due to its widespread applicability and inherent ability to capture uncertainty for entangled quantum states, VNE has become a cornerstone for modern quantum information theory \cite{wilde2013quantum}.

Recently, this formalism was extended to study structural information for graphs   \cite{braunstein2006laplacian,de2015structural,li2018network}, although its more reliable generalization that preserves the   sub-additivity property of VNE, was proposed later in \cite{de2016spectral} and further developed in \cite{ghavasieh2020statistical}. (See \cite{ghavasieh2022statistical} for a review). Applications have included the study of global trade networks \cite{de2015structural} and the functional connectivity of brain networks \cite{nicolini2019brain,benigni2021persistence}.  We will utilize the sub-additive-preserving definition of VNE to develop an entropy-based centrality measure for edges in graphs.

We begin be defining a generalized notion of VNE for networks that incorporates and extends earlier versions.

\begin{definition}[von Neumann Entropy for Graphs]\label{def:VNE}
Let {$G(\mathcal{V},\mathcal{E})$} be a graph and $L$ be an associated  Laplacian matrix.
Assume $L$  has the diagonalization  $L=U\Lambda U^{-1}$, where  $\Lambda = \text{diag}(\vec{\lambda})$, $\vec{\lambda} = [\lambda_1,\dots,\lambda_N]^T$ with $\lambda_i\in\mathbb{R}$, and $U$ contains the associated (right) eigenvectors as columns.
Further, let $\rho(L)$ be a matrix-valued analytic function that admits the diagonalization  $\rho(L) = U \mbox{diag}[f_1(\vec{\lambda}),\dots,f_N(\vec{\lambda})] U^{-1}$ and satisfies $1 = \Tr(\rho(L)) = \sum_i f_i(\vec{\lambda})$.
We then define a general form for VNE of graph {$G(\mathcal{V},\mathcal{E})$} by
\begin{align} \label{eq:vne1}
h(L) &= -\Tr[ \rho(L) \log_2 \rho(L)] \nonumber\\
 &= - \sum_{k = 1}^{N} f_k(\vec{\lambda}) \log_2 f_k(\vec{\lambda}).
 \end{align}
By convention, we define $0 \log_2(0) = 0$. 
\end{definition}

\begin{remark}
Because $1 = \sum_k f_k(\vec{\lambda})$, one can interpret each $f_k(\vec{\lambda})$ as a probability, and then VNE coincides with Shannon entropy for the set    $\{f_k(\vec{\lambda})\}$ of probabilities. 
\end{remark}

\begin{remark}
We define a matrix-valued function  $\rho(L)$ that satisfies $\text{Trace}(\rho(L))=1$ 
to be a ``trace-normalized matrix transformation''.  In principle, matrix $L$ need not be restricted to  Laplacian matrices, and it could represent any  graph-encoding matrix that is diagonalizable and has  real-valued eigenvalues. 
\end{remark}

The earliest application of VNE to graphs that we know of can be attributed to Braunstein et al.\  \cite{braunstein2006laplacian}, wherein the authors restricted their attention to the choice of function 
 $f_k(\vec{\lambda}) = \lambda_k  /\sum_j \lambda_j$ for possibly weighted graphs, and we  will refer to the associated VNE as  \emph{uniform VNE}. For unweighted graphs,  $\sum_j \lambda_j=2M$ gives twice the number of edges. However, it has been shown that this definition does not preserve the desired property of sub-additivity for   entropies \cite{de2015structural}. 
 More recently, De Domenico and Biamonte  \cite{de2016spectral} proposed using $\rho(L)=\rho_\beta(L)$, as defined in Eq.~\eqref{eq:expo22}, in which case  $f_k(\vec{\lambda}) =  s_k$, as defined {for  Eq.~\eqref{eq:expo22}}. 
%
%
%
Importantly, this definition does satisfy the sub-additivity property, {which states that $h(L^{(1)} + L^{(2)}) \le h(L^{(1)}) + h(L^{(2)})$ for two graphs with equally-sized Laplacian matrices $L^{(1)},L^{(2)}\in\mathbb{R}^{N\times N}$}.


Our definition of VNE above allows  $L$ to represent either a normalized or unnormalized Laplacian matrix. 
%
Here, we focus on the choice of 
%
%
the combinatorial Laplacian matrix $L$    given by  Eq.~\eqref{eq:Laplacian}, and we refer to the corresponding VNE measure  as \emph{diffusion-kernel VNE}.
We note in passing, however, that   VNE has also been studied  using a left normalized  (also called random-walk) Laplacian matrix $\mathcal{L} = D^{-1}L $  \cite{ghavasieh2020enhancing}, and we will refer to that form    as \emph{continuous-time random-walk VNE.}
%
Regardless of the choice of matrix $L$, large VNE values indicate that  diffusion transport is dispersed across many   information streams (i.e., many eigensubspaces), whereas  small VNE values indicate that  it is localized to one, or a few,  information streams (i.e., concentrated onto a smaller eigensubspace). See \cite{ghavasieh2020statistical,benigni2021persistence,ghavasieh2022statistical} for the exploration of other relationships between VNE and information diffusion dynamics, including discussions on trapped fields and information field diversity.


\subsection{Spectral Perturbation Theory for Centrality Analysis}\label{sec:pert0}
%
Given that diffusion-kernel VNE measures the spectral complexity of information diffusion, the remainder of this paper will focus on VNE-based analyses of networks.
Specifically, we propose to rank edges (although our methods   easily extend to nodes and subgraphs) according to how each edge's  removal would change the graph's VNE. 

Our work is largely motivated by prior centrality measures that quantify importance by considering how structural modifications lead to perturbations for the spectral properties of graph-encoding matrices. 
Of particular relevance is recent work  \cite{ghavasieh2021unraveling} that proposed a node ranking  by examining how VNE changes occur after node removals, and  \emph{node entanglement} was defined as a centrality measure inspired by the concept of quantum entanglement for quantum-mechanical systems. 
Herein, we will develop a complementary edge centrality measure that ranks edges based on  how each edge's  removal would change the graph's VNE.
%
Importantly, centrality measures relating to spectral changes can be computationally expensive if many eigenvalues and eigenvectors are involved in the measures' definitions. In particular, VNE requires knowledge of all the eigenvalues, and so the direct recomputation of VNE for the Laplacian matrices of graphs after the removal of nodes or edges would be impractical for large graphs.

Given this limitation,  herein we propose  to build on existing techniques that leverage spectral perturbation theory to   approximate centrality measures and rankings in a way that is more computationally efficient.
For example, the \emph{dynamical importance} \cite{restrepo2006} of a node or edge is defined as the decrease that would occur for the adjacency matrix's  spectral radius if that node or edge is removed (see also \cite{tong2012gelling}).  {In addition, one can define the importance of a node by computing how much the PageRank or other centrality measures change upon the  removal of that node \cite{du2008perturbationrank}.}  Other related work includes considering the  impact of structural modifications on the spectra of Laplacian matrices \cite{milanese2010approximating,li2018network,song2021asymmetric} as well as functions of such eigenvalues and eigenvectors \cite{taylor2016synchronization}. Such centrality measures can be efficiently approximated by developing linear approximations for how structural perturbations of graphs give rise to spectral perturbations. We also refer the reader to \cite{taylor2017eigenvector,taylor2019tunable} for the use of more advanced matrix-perturbation techniques (namely singular perturbation theory) that  have arisen in the context of centrality measures for multiplex and temporal networks. We note in passing that these perturbative  analyses of networked-coupled dynamical systems are closely  related to the notions of eigenvalue and eigenvector elasticities \cite{kampmann1996feedback,kampmann2006loop}.

Spectral perturbation theory for network  modifications often stems from the following first-order approximation.

\begin{theorem}[Perturbation of Simple Eigenvalues  \cite{atkinson2008introduction}] \label{def:pert_thm}
Let $ {X}$ be a symmetric $N\times N$ matrix with eigenvalues $\{\lambda_j\}$ and normalized eigenvectors $\{{\bf u}^{(j)}\}$. Consider an  eigenvalue $\lambda_i$   that is simple in that it has algebraic  multiplicity one: $\lambda_i\not=\lambda_j$ for any other $j$ and $\lambda_i$ has a 1-dimensional eigenspace spanned by the eigenvector  ${\bf u}^{(i)}\in \mathbb{R}^N  $. 
Further, consider a fixed symmetric perturbation matrix $\Delta  {X}$, and let $ {X}(\epsilon)= {X}+\epsilon \Delta  {X}$.
We denote the eigenvalues  {and eigenvectors} of $ {X}(\epsilon)$ by $\lambda_j(\epsilon)$  {and ${\bf u}^{(j)}(\epsilon)$, respectively.
It then follows that
\begin{align}
\lambda_i(\epsilon)  &= \lambda_i + \epsilon \lambda'_i(0) + \mathcal{O}(\epsilon^2),  
\label{eq:eigen_perturb_a}
\end{align}
where $\lambda_i'(0)$ is the derivative of $\lambda_i(\epsilon)$ with respect to $\epsilon$ at $\epsilon=0$ and is given by 
\begin{align}
\lambda_i'(0) & = ({\bf u}^{(i)})^T \Delta  {X} {\bf u}^{(i)}  .
\label{eq:eigen_perturb_b}
\end{align}
{Here, $\mathcal{O}(\epsilon^2)$ indicates that the difference between the right and left-hand sides has an asymptotic $\epsilon \to 0$ scaling behavior that is {upper} bound by {a constant multiple of} $\epsilon^2$.}
}
\end{theorem}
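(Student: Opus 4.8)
The plan is to exploit the fact that a \emph{simple} eigenvalue and its eigenvector vary smoothly under the perturbation, and then to extract the first derivative by differentiating the eigenvalue equation and projecting onto the unperturbed eigenvector.

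First I would establish the regularity that makes the derivative $\lambda_i'(0)$ well-defined. Since $\lambda_i$ is simple, it is a simple root of the characteristic polynomial $\det(X(\epsilon)-\lambda I)$, whose coefficients are polynomial (hence analytic) in $\epsilon$. The implicit function theorem then yields a neighborhood of $\epsilon=0$ on which a unique root $\lambda_i(\epsilon)$ persists, depends analytically on $\epsilon$, and satisfies $\lambda_i(0)=\lambda_i$; the associated normalized eigenvector ${\bf u}^{(i)}(\epsilon)$ can likewise be chosen to depend analytically on $\epsilon$. This is precisely where simplicity is essential: for a degenerate eigenvalue the branches may fail to be differentiable (eigenvalue crossings), so I expect this regularity step to be the main obstacle. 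It can be justified either directly from the implicit function theorem or by appealing to the standard analytic perturbation theory for symmetric matrices cited in the theorem statement.

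Granting analyticity, I would differentiate the identity $X(\epsilon){\bf u}^{(i)}(\epsilon)=\lambda_i(\epsilon){\bf u}^{(i)}(\epsilon)$ with respect to $\epsilon$ and evaluate at $\epsilon=0$. Writing ${\bf v}$ for the derivative $\frac{d}{d\epsilon}{\bf u}^{(i)}(\epsilon)\big|_{\epsilon=0}$ and using $\frac{d}{d\epsilon}X(\epsilon)=\Delta X$, this gives
\[
\Delta X\,{\bf u}^{(i)} + X{\bf v} = \lambda_i'(0)\,{\bf u}^{(i)} + \lambda_i{\bf v}.
\]
Left-multiplying by $({\bf u}^{(i)})^T$ and invoking the symmetry of $X$, which gives $({\bf u}^{(i)})^T X = \lambda_i({\bf u}^{(i)})^T$, the two terms involving ${\bf v}$ cancel, namely $({\bf u}^{(i)})^T X {\bf v} = \lambda_i ({\bf u}^{(i)})^T {\bf v}$. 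Since ${\bf u}^{(i)}$ is normalized, $({\bf u}^{(i)})^T{\bf u}^{(i)}=1$, and I am left with $\lambda_i'(0)=({\bf u}^{(i)})^T \Delta X\,{\bf u}^{(i)}$, which is Eq.~\eqref{eq:eigen_perturb_b}. Note that the eigenvector derivative ${\bf v}$ never has to be computed explicitly, as it is annihilated by the projection.

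Finally, the first-order expansion Eq.~\eqref{eq:eigen_perturb_a} with an $\mathcal{O}(\epsilon^2)$ remainder follows at once from Taylor's theorem applied to the analytic scalar function $\lambda_i(\epsilon)$ about $\epsilon=0$. In summary, once smoothness of the eigenpair is secured, the derivative computation reduces to a short projection argument relying only on the symmetry of $X$ and the normalization of ${\bf u}^{(i)}$.
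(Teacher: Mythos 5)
Your proposal is correct. The paper provides no argument of its own for this theorem --- its proof is simply the citation ``See \cite{atkinson2008introduction}'' --- and your derivation (implicit function theorem to secure analyticity of the simple eigenpair, then differentiating $X(\epsilon){\bf u}^{(i)}(\epsilon)=\lambda_i(\epsilon){\bf u}^{(i)}(\epsilon)$ and projecting onto $({\bf u}^{(i)})^T$ so that symmetry annihilates the eigenvector derivative) is exactly the standard argument such references contain, with the simplicity hypothesis correctly identified as the step that makes the whole construction well-defined.
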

\begin{proof}
See \cite{atkinson2008introduction}.
\end{proof}
\begin{remark}
The first-order approximation given by Eq.~\eqref{eq:eigen_perturb_a} is accurate when the perturbation is sufficiently small,
$|\epsilon\lambda'_i(0)|\ll |\lambda_i|$.
\end{remark}

To study how edge perturbations change the spectral properties of a   Laplacian matrix $L$ given by Eq.~\eqref{eq:Laplacian}, we can set $X=L$ and study a perturbation  $\Delta L$ encoding the changed edge(s).

\begin{proposition}[Weighted Edge Perturbations {for Laplacian matrices}] \label{cor:1edge}
When the weighted and undirected network is modified by adding an edge $(p,q)$ of weight $A_{pq}$, the Laplacian perturbation matrix takes the form
\begin{align}
\Delta  {L}_{ij}^{(pq)} = \left\{ \begin{array}{rl} 
A_{pq} ,&  {(i,j)\in\{(p,p),(q,q)\}}\\ 
-A_{pq} ,&  {(i,j)\in\{(p,q),(q,p)\}}\\ 
0,&\text{otherwise}. 
\end{array}\right.
\label{eq:DL}
\end{align}
Similarly, when the network is modified by removing an edge $(p,q)$, the corresponding Laplacian perturbation matrix is $-\Delta  {L}_{ij}^{(pq)}$. (We emphasize that in this notation we use $A_{pq}$ to be the (nonzero) weight of the edge when present --- that is, the weight of the newly added edge or the weight of the edge prior to its removal.)
\end{proposition}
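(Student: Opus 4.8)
The plan is to verify the claimed form of $\Delta L^{(pq)}$ by a direct, entrywise computation that tracks separately how the insertion of edge $(p,q)$ perturbs the adjacency matrix $A$ and the degree matrix $D$, and then recombines them through the defining relation $L = D - A$ of Eq.~\eqref{eq:Laplacian}. Since $L$ depends on the edge weights only through $A$ and the induced strengths $d_i = \sum_j A_{ij}$, the perturbation decomposes cleanly as $\Delta L^{(pq)} = \Delta D^{(pq)} - \Delta A^{(pq)}$, and each summand can be read off from the convention in Eq.~\eqref{eq:Adj}.

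First I would describe the change at the level of the adjacency matrix. Adding an undirected edge $(p,q)$ of weight $A_{pq}$ amounts to raising the $(p,q)$ and $(q,p)$ entries of $A$ from $0$ to $A_{pq}$ while leaving all other entries unchanged; hence $\Delta A^{(pq)}$ equals $A_{pq}$ at the two positions $(p,q)$ and $(q,p)$ and is zero elsewhere, and its symmetry is automatic. Propagating this to the degree matrix, the identity $d_i = \sum_j A_{ij}$ shows that only $d_p$ and $d_q$ increase, each by exactly $A_{pq}$, so $\Delta D^{(pq)}$ is diagonal with value $A_{pq}$ at $(p,p)$ and $(q,q)$ and zero otherwise.

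Finally I would assemble $\Delta L^{(pq)} = \Delta D^{(pq)} - \Delta A^{(pq)}$ entrywise. The diagonal entries $(p,p)$ and $(q,q)$ inherit $+A_{pq}$ from $\Delta D^{(pq)}$, since the purely off-diagonal $\Delta A^{(pq)}$ contributes nothing there; the off-diagonal entries $(p,q)$ and $(q,p)$ inherit $-A_{pq}$ from $-\Delta A^{(pq)}$, since the purely diagonal $\Delta D^{(pq)}$ contributes nothing there; and all remaining entries vanish. This is precisely Eq.~\eqref{eq:DL}. Edge removal is the exact inverse operation, lowering the same two adjacency entries from $A_{pq}$ back to $0$, so by the linearity of $L$ in $A$ its perturbation is $-\Delta L^{(pq)}$, which justifies the stated sign flip.

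I do not expect a genuine obstacle here: the result is a routine verification rather than a deep claim. The only points requiring care are the bookkeeping of which contributions land on the diagonal (through $D$) versus off the diagonal (through $A$), and adhering to the stated weight convention in which $A_{pq}$ denotes the edge weight itself rather than a normalized or doubled quantity, so that no spurious factors of two creep in when relating $\Delta A^{(pq)}$ to $\Delta D^{(pq)}$.
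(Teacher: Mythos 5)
Your verification is correct: decomposing the perturbation as $\Delta L^{(pq)} = \Delta D^{(pq)} - \Delta A^{(pq)}$ and reading off the entries from $L = D - A$ is exactly the routine argument this statement calls for, and the sign-flip for removal follows from linearity as you say. The paper itself offers no proof of this proposition (it cites it as a standard fact from prior literature, e.g.\ for unweighted graphs), so your direct entrywise computation is precisely the argument being taken for granted; as a minor remark, the result can also be packaged compactly as $\Delta L^{(pq)} = A_{pq}\,({\bf e}_p - {\bf e}_q)({\bf e}_p - {\bf e}_q)^T$, which makes the positive semidefiniteness of the perturbation evident.
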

Versions of Proposition~\ref{cor:1edge} have appeared in various literatures, usually under the assumption of unweighted networks (see, e.g., \cite{taylor2016synchronization,li2018network}, in which case $A_{ij}\in\{0,1\}$). 


{This form} for $\Delta L$ can be combined with Thm.~\ref{def:pert_thm} to provide first-order approximations for the impact of network modifications on the eigenvalues of a Laplacian matrix.

\begin{lemma}[First-order Spectral Impact of Edge Additions and Removals \cite{milanese2010approximating}] \label{cor:spec_pert}
Let $\Delta  {L}^{(pq)}$ and $-\Delta  {L}^{(pq)} $, respectively, denote the perturbation matrix for a graph Laplacian under the addition and removal of a weighted edge $(p,q)$ with weight $A_{pq}$ as defined in Prop.~\ref{cor:1edge}. Then the simple eigenvalues $\lambda_i(\epsilon)$ of the perturbed matrix $L\pm \epsilon \Delta L^{(pq)}$ are given by Eq.~\eqref{eq:eigen_perturb_a} with
\begin{align}
\lambda_i'(0) = \pm ({\bf u}^{(i)})^T \Delta  {L}^{(pq)} {\bf u}^{(i)} &= \pm A_{pq}({ u}^{(i)}_p-{ u}^{(i)}_q)^2.
\end{align}
\end{lemma}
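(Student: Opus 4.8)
The plan is to apply Thm.~\ref{def:pert_thm} directly with the choices $X = L$ and $\Delta X = \pm \Delta L^{(pq)}$, and then to evaluate the resulting quadratic form explicitly using the sparse structure of $\Delta L^{(pq)}$ recorded in Eq.~\eqref{eq:DL}. First I would verify that the hypotheses of Thm.~\ref{def:pert_thm} are met: the Laplacian $L = D - A$ of Eq.~\eqref{eq:Laplacian} is real symmetric because $A$ is symmetric and $D$ is diagonal, and the perturbation $\Delta L^{(pq)}$ is itself symmetric, since Eq.~\eqref{eq:DL} assigns the same off-diagonal value $-A_{pq}$ to both $(p,q)$ and $(q,p)$. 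Thus $\pm \Delta L^{(pq)}$ is an admissible symmetric perturbation, and for any eigenvalue $\lambda_i$ that is simple the theorem guarantees the expansion of Eq.~\eqref{eq:eigen_perturb_a} with first-order coefficient given by Eq.~\eqref{eq:eigen_perturb_b}, namely $\lambda_i'(0) = ({\bf u}^{(i)})^T(\pm \Delta L^{(pq)}){\bf u}^{(i)} = \pm ({\bf u}^{(i)})^T \Delta L^{(pq)} {\bf u}^{(i)}$.

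The substantive step is then to reduce the quadratic form $({\bf u}^{(i)})^T \Delta L^{(pq)} {\bf u}^{(i)}$ to the claimed perfect square. Since $\Delta L^{(pq)}$ has only four nonzero entries by Eq.~\eqref{eq:DL} --- the diagonal pair $(p,p),(q,q)$ each equal to $A_{pq}$ and the off-diagonal pair $(p,q),(q,p)$ each equal to $-A_{pq}$ --- expanding $\sum_{j,k} { u}^{(i)}_j [\Delta L^{(pq)}]_{jk} { u}^{(i)}_k$ leaves exactly those four surviving terms. Collecting them gives $A_{pq}\big[({ u}^{(i)}_p)^2 + ({ u}^{(i)}_q)^2 - 2{ u}^{(i)}_p { u}^{(i)}_q\big]$, and completing the square yields $A_{pq}({ u}^{(i)}_p - { u}^{(i)}_q)^2$. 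Combining this with the sign carried from $\pm \Delta L^{(pq)}$ produces $\lambda_i'(0) = \pm A_{pq}({ u}^{(i)}_p - { u}^{(i)}_q)^2$, which is the assertion.

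There is essentially no deep obstacle here; the lemma is a specialization of the general perturbation theorem to a rank-structured single-edge perturbation, so the only points requiring care are bookkeeping. The one thing I would be deliberate about is the sign convention: Prop.~\ref{cor:1edge} defines $\Delta L^{(pq)}$ as the perturbation for edge \emph{addition} and $-\Delta L^{(pq)}$ for \emph{removal}, so I would track the overall sign through $\Delta X = \pm \Delta L^{(pq)}$ and confirm that the $+$ branch corresponds to adding the edge and the $-$ branch to removing it. It is also worth noting that the first-order coefficient is automatically nonnegative for additions and nonpositive for removals, reflecting the fact that $\Delta L^{(pq)}$ is positive semidefinite and that adding a weighted edge increases the Dirichlet form $\tfrac{1}{2}\sum_{i,j} A_{ij}(x_i - x_j)^2$; this provides a useful sanity check but is not needed for the proof itself.
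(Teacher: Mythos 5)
Your proof is correct and follows essentially the same route as the paper, which presents this lemma as an immediate consequence of combining Thm.~\ref{def:pert_thm} (with $X=L$, $\Delta X = \pm\Delta L^{(pq)}$) with the sparse form of $\Delta L^{(pq)}$ in Prop.~\ref{cor:1edge}, the only computation being the expansion of the quadratic form into $A_{pq}\bigl(u_p^{(i)}-u_q^{(i)}\bigr)^2$. Your additional sanity check via positive semidefiniteness of $\Delta L^{(pq)}$ is a nice touch but, as you note, not needed.
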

%


In the next section, we extend this spectral approximation theory to diffusion-kernel VNE to develop VNE-based edge-ranking algorithms that are computationally efficient and can scale to large networks.
It is also worth noting that the above-mentioned spectral perturbation theory has been previously used to study graph VNE \cite{li2018network}, although that work {did not investigate centrality and} focused on a formulation in which {$s_k = \lambda_k / 2M$}.
{Finally, we note that the above results 
easily extend to describe the first-order effects due to the modification of sets of edges  \cite{taylor2016synchronization}, thereby 
allowing predictions for node and subgraph modifications (although that is not our focus here).}
%

\section{VNE perturbations measure the   importance of edges}\label{sec:results}
%
We now present our main theoretical and algorithmic results.
Specifically, we introduce   algorithms that rank edges according to  their contribution to the spectral complexity of information diffusion over networks, as measured by diffusion-kernel VNE.
In Sec.~\ref{sec:Rank}, we develop these rankings and discuss their limitation to small graph sizes.
In Sec.~\ref{sec:Rank2}, we develop an approximate-ranking algorithm that efficiently scales to large graphs and is based on spectral perturbation theory.
In Sec.~\ref{sec:runtime}, we present experiments to compare these two rankings, which are based on the true and approximate values for how VNE would be perturbed by edge removals. (Further experiments are deferred to Sec.~\ref{sec:emp_exp}.)

\subsection{Edge rankings by VNE increases upon  edge removals}\label{sec:Rank}

Given a graph $G(\mathcal{V},\mathcal{E})$ with $N$ nodes and $|\mathcal{E}|$ undirected edges with weights $\{A_{ij}\}$, we seek to rank the edges so that the most important edge $(p,q)\in\mathcal{E}$ is the one that would maximize the VNE of a residual graph that results from removing edge $(p,q)$. 
More precisely, let $L$ be the unnormalized graph Laplacian of $G(\mathcal{V},\mathcal{E})$ given by Eq.~\eqref{eq:Laplacian} and  $-\Delta L^{(pq)}$ be the change to $L$ given by Eq.~\eqref{eq:DL} that would occur upon the removal of edge $(p,q)\in\mathcal{E}$. Further, let $h(L)$ be the graph's VNE, as given by Eq.~\eqref{eq:vne1}. Then   the change to VNE that would occur upon the removal of edge $(p,q)$ is given by
\begin{align}\label{eq:VNE_change}
Q_{pq} = h(L-\Delta L^{(pq)}) - h(L).
\end{align}
In principle, $Q_{pq}$ is not necessarily positive, although our experiments suggest that $Q_{pq}>0$ for most edges.
Considering the set $\{Q_{pq}\}$ of VNE perturbations for $(p.q)\in\mathcal{E}$, the top-ranked edge is a solution to the following optimization problem
\begin{equation}
    \label{eq:Qactual}
    (p,q) = \text{argmax}_{(p,q)\in\mathcal{E}}~Q_{pq}.
\end{equation}

\begin{definition}[Edge Rankings by VNE Increases]
Given a graph $ {G}=(\mathcal{V},\mathcal{E})$ with unnormalized Laplacian $L$, consider the changes $Q_{pq}$ to VNE  given by Eq.~\eqref{eq:VNE_change}, which  would occur upon the removals  of  edges $(p,q)\in\mathcal{E}$. Then, we define the rankings
\begin{align}
R_{pq} = 1+ | \hat{\mathcal{E}}|,~\text{where}~ \hat{\mathcal{E}} = \{ (n,m)\in\mathcal{E} : Q_{nm} > Q_{pq}\}\label{eq:rank1}
\end{align}
so that $R_{pq}\in\{1,\dots,|\mathcal{E}|\}$ gives the rank of each edge $(p,q)\in\mathcal{E}$. 
\end{definition}
{In practice, we implement these rankings efficiently using the NumPy function `argsort` in Python.}

\begin{algorithm}[h]
\caption{{\bf Edge Rankings by VNE Increases}}
\begin{algorithmic}[1]\label{alg:actual}
\REQUIRE Graph $G(\mathcal{V},\mathcal{E})$ with nodes $\mathcal{V}$, weighted edges $\mathcal{E}$, and unnormalized Laplacian $L$.
\ENSURE Cardinality $|\mathcal{E}|>0$ 
\STATE Compute set $\{\lambda_i\}$ of eigenvalues for $L$.
\STATE Compute VNE $h(L)$ according to Eq.~\eqref{eq:vne1}.
\FOR{$(p,q)\in \mathcal{E}$}
\STATE {Compute the perturbation matrix $\Delta L^{(pq)}$ according to Eq.~\eqref{eq:DL}}.
\STATE Compute the change $Q_{pq}$ according to Eq.~\eqref{eq:VNE_change} {using $L$ and $\Delta L^{(pq)}$}.
\ENDFOR
\STATE {Use Eq.~\eqref{eq:rank1}} on the set $\{{Q}_{pq}\}$ to obtain the set  $\{R_{pq}\}$ of rankings.
\end{algorithmic}
\end{algorithm}

\begin{remark}
One can rank edges according to  perturbations to VNE given by Eq.~\eqref{eq:vne1} for any choice of the function $\rho(\lambda)$. We will focus herein on the diffusion-kernel VNE in which $\rho(\lambda)$ is given by {Eq.~\eqref{eq:expo22}}.
\end{remark}

\begin{remark}
Note that the $Q_{pq}$ values for different edges $(p,q)$ can be the same, in which case the rankings are not   unique. Such scenarios can be handled in a variety of ways. We resolve edges with tied rankings {arbitrarily}. That is, if 
$k$ edges are tied for rank $R$, then we {arbitrarily} assign them ranks $R$, $R+1$, \dots, $R+k$.
\end{remark}

Before continuing, we highlight that the direct computation of rankings $R_{pq}$ will be infeasible for large graphs due to the large computational cost. For example,   in   Sec.~\ref{sec:runtime} we estimate that the runtime to compute the set $\{Q_{pq}\}$ for a graph with  $N=8000$ nodes would take approximately 1300 hours (about 54 days) to complete on a standard desktop. 
{More specifically}, computing all  eigenvalues $\{\lambda_i\}$ of a {size-$N$} matrix $L$ is an expensive task 
{having} computational complexity {that typically scales proportionally to $N^3$ (e.g., if using the QR method \cite{golub2013matrix})}. In order to compute the VNE change $Q_{pq}$ given by Eq.~\eqref{eq:VNE_change} for an edge $(p,q)$, one needs to first compute the perturbed eigenvalues $\{\lambda_i'\}$ for  a perturbed Laplacian $L'=L - \Delta L^{(p,q)}$. Since this must be done for each edge, constructing the set $\{Q_{pq}\}$ for $(p,q)\in\mathcal{E}$ has computational complexity {that is expected to scale proportionally to $|\mathcal{E}|N^3$. That said, developing numerical methods to compute the eigenvalues for large sparse matrices remains an active research field, and in  Sec.~\ref{sec:Discuss} we survey some leading approaches  that may be useful to speed up Algorithm~\ref{alg:actual}.} 


{Nevertheless, it is impractical for large networks to numerically compute the eigenvalues for $|\mathcal{E}|$ distinct matrices   of size $N$.}
 Thus motivated,  in the next section we develop an approximate ranking algorithm that is much more efficient  and can be readily applied to larger graphs.  

\subsection{Perturbation theory for VNE  and efficient edge ranking}\label{sec:Rank2}
%
Implementing Algorithm~\ref{alg:actual} can be computationally infeasible for large graphs, and here we develop approximate rankings based on  the spectral perturbation theory that we presented in Sec.~\ref{sec:pert0}.
Recall that we define VNE (see Def.~\ref{def:VNE}) using a spectral map
%
%
%
${\bf f}:\mathbb{R}^N\mapsto \mathbb{R}^N$ that is defined entry-wise by
\begin{equation} \label{eq:e2}
{\bf f}(\vec{\lambda})=[{  f}_1(\vec{\lambda}),\dots,{  f}_N(\vec{\lambda})]^T,
\end{equation}
where  $\vec{\lambda}=[\lambda_1,\dots,\lambda_N]^T$ is a vector of eigenvalues. For uniform VNE, one has $f_k(\vec{\lambda}) = \lambda_k/(2M)$, which depends only on the $k$th eigenvalue since $\sum_k \lambda_k = 2M$ is a conserved quantity equal to total edge weight.
In contrast, for diffusion-kernel VNE  one has that  $ f_k(\vec{\lambda}) = s_k  = e^{-\beta\lambda_k}/Z(\beta)$, as defined  {for  Eq.~\eqref{eq:expo22}}. 
%
In this latter case, each $f_k(\vec{\lambda})$ depends on all the eigenvalues due to the denominator term $Z(\beta)=\sum_k e^{-\beta\lambda_k}$.

We now  approximate  how structural modifications change VNE to first order.

\begin{lemma}[First-Order Perturbation of VNE]\label{def:gen_FOP_0}
Let $L$ be a diagonalizable Laplacian matrix  given by Eq.~\eqref{eq:Laplacian} for an undirected graph and $L' = L +\epsilon \Delta L$ denote its perturbation for a small scalar $\epsilon$. Further, let $h(L)$ be the graph's VNE given by Def.~\ref{def:VNE}.
Letting $H(\epsilon)\equiv h(L+\epsilon \Delta L) $ be the $\epsilon$-perturbed VNE, then    $H(\epsilon)$   has the   first-order approximation
\begin{equation} \label{eq:taylor_series}
H(\epsilon) = H(0) + \epsilon H'(0) + \mathcal{O}(\epsilon^{2}),
\end{equation}
where $H(0)=h(L)$ is the VNE of the original graph.
Morever, when the eigenvalues $\lambda_i$ of $L$ are simple, then  the derivative
\begin{equation}\label{eq:partials}
    H'(0) = \left. \sum_{i,j} \dfrac{\partial h}{\partial f_i}\dfrac{\partial f_i}{\partial \lambda_j}  \dfrac{\partial \lambda_j}{\partial \epsilon} \right|_{\epsilon=0}\nonumber,
\end{equation}
can be constructed using  partial derivatives with  $\left.  {\partial \lambda_j}/{\partial \epsilon} \right|_{\epsilon=0} = \lambda'_j(0) =({\bf u}^{(j)})^T \Delta  {L} {\bf u}^{(j)}$  given by Eq.~\eqref{eq:eigen_perturb_b} in Thm.~\ref{def:pert_thm}. Here, ${\bf u}^{(j)}$ is the eigenvector associated with $\lambda_j$.
\end{lemma}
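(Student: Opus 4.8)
The plan is to exploit the fact that VNE depends on the perturbed Laplacian $L+\epsilon\Delta L$ only through its spectrum. Writing $\tilde h(\vec\lambda) = -\sum_k f_k(\vec\lambda)\log_2 f_k(\vec\lambda)$ for the expression appearing in Eq.~\eqref{eq:vne1}, we have $H(\epsilon)=\tilde h(\vec\lambda(\epsilon))$, where $\vec\lambda(\epsilon)$ collects the eigenvalues of $L+\epsilon\Delta L$. In particular $H(0)=\tilde h(\vec\lambda(0))=h(L)$, since setting $\epsilon=0$ recovers the original Laplacian; this gives the easy part of the statement immediately. The expansion Eq.~\eqref{eq:taylor_series} and the chain-rule formula Eq.~\eqref{eq:partials} for $H'(0)$ will then follow from two ingredients: the differentiability of each eigenvalue branch $\lambda_j(\epsilon)$, supplied by Thm.~\ref{def:pert_thm}, and the smoothness of the map $\vec\lambda\mapsto\tilde h(\vec\lambda)$.

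First I would establish, under the simplicity hypothesis, that each $\lambda_j(\epsilon)$ is differentiable at $\epsilon=0$ with $\lambda'_j(0)=({\bf u}^{(j)})^T\Delta L\,{\bf u}^{(j)}$. This is exactly the content of Thm.~\ref{def:pert_thm} applied with $X=L$ and perturbation $\Delta X=\Delta L$; note that $L$ and $\Delta L$ (Eq.~\eqref{eq:DL}) are symmetric, so $L+\epsilon\Delta L$ is symmetric and the theorem applies. Simplicity is what guarantees that distinct eigenvalue branches do not cross near $\epsilon=0$ and hence remain (real-)analytic there, so $H(\epsilon)$ is in particular twice continuously differentiable in a neighborhood of the origin and Taylor's theorem yields Eq.~\eqref{eq:taylor_series} with an $\mathcal{O}(\epsilon^2)$ remainder.

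Next I would compute $H'(0)$ by differentiating the composition $H=\tilde h\circ\vec\lambda$ and applying the chain rule twice. The outer layer gives $\frac{d}{d\epsilon}\tilde h(\vec\lambda(\epsilon)) = \sum_j \frac{\partial\tilde h}{\partial\lambda_j}\,\lambda'_j(\epsilon)$, and expanding $\frac{\partial\tilde h}{\partial\lambda_j}=\sum_i \frac{\partial h}{\partial f_i}\frac{\partial f_i}{\partial\lambda_j}$ through the intermediate probability variables $f_i$ produces exactly Eq.~\eqref{eq:partials}; substituting the eigenvalue derivative from the first step then completes the formula. For the diffusion-kernel choice $f_k=e^{-\beta\lambda_k}/Z_\beta(L)$ from Eq.~\eqref{eq:expo}, every factor is manifestly smooth and strictly positive, so the partial derivatives $\partial h/\partial f_i=-(\log_2 f_i+1/\ln 2)$ and $\partial f_i/\partial\lambda_j$ are well defined, and this is where I would record the explicit closed form later used by the ranking algorithm.

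The main obstacle is not the algebra but the regularity needed to differentiate through the spectrum. The delicate point is the simplicity assumption: at a degenerate eigenvalue the branches $\lambda_j(\epsilon)$ need only be Lipschitz and can fail to be differentiable, so the chain rule in Eq.~\eqref{eq:partials} would break down and one would instead have to work with the directional derivatives of a multiple eigenvalue. I would therefore emphasize that the stated formula holds precisely because the hypothesis rules out eigenvalue crossings at $\epsilon=0$. A secondary subtlety, which does not arise for the diffusion kernel but would for uniform VNE, is the non-smoothness of the map $f\mapsto f\log_2 f$ at $f=0$ (e.g.\ at the null eigenvalue $\lambda_1=0$), where the convention $0\log_2 0=0$ fixes the value but the corresponding partial derivative must be handled separately.
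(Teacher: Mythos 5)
Your proof is correct and follows essentially the same route the paper takes: $H(\epsilon)$ is viewed as the composition of the eigenvalue map $\vec{\lambda}(\epsilon)$ (differentiable at $\epsilon=0$ by Thm.~\ref{def:pert_thm} under the simplicity hypothesis) with the entropy function, and the expansion~\eqref{eq:taylor_series} and formula~\eqref{eq:partials} then follow from Taylor's theorem and the chain rule, with the explicit partial derivatives deferred to the specific choice of $f_i$ (as the paper does in Appendix~\ref{app:proof_vne} for the diffusion kernel). The paper states the lemma without a separate proof, treating it as immediate; your added care about eigenvalue-branch regularity at crossings and the non-smoothness of $f \log_2 f$ at $f=0$ (relevant for uniform VNE at $\lambda_1=0$) only makes explicit what the paper leaves implicit.
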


Lemma~\ref{def:gen_FOP_0} can be generally applied to different choices for the density function $\rho(L)$ described in Def.~\ref{def:VNE}. For example, it can recover a  previously shown result for uniform VNE.

\begin{theorem}[First-Order Perturbation of Uniform VNE \cite{li2018network}]\label{def:gen_FOP_0d}
Consider a uniform density function $f_i(\vec{\lambda}) =\rho(\lambda_i) = \lambda_i/\sum_j \lambda_j$ and the full set of assumptions in Lemma~\ref{def:gen_FOP_0}. Then the perturbed VNE is given by Eq.~\eqref{eq:taylor_series}  with
\begin{equation} \label{eq:vne_perturb3}
H'(0) = -\frac{1}{2M} \sum_{i }   ({\bf u}^{(i)})^T \Delta  {L} {\bf u}^{(i)}  \left[ \log_2 \left( \frac{\lambda_i}{\sum_j \lambda_j} \right) + \frac{1}{\ln(2)} \right] .
\end{equation}
\end{theorem}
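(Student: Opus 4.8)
The plan is to specialize the chain rule furnished by Lemma~\ref{def:gen_FOP_0} to the uniform density $f_i(\vec{\lambda}) = \lambda_i/\sum_j \lambda_j$. Writing $S = \sum_j \lambda_j = 2M$, I would assemble $H'(0) = \sum_{i,j} (\partial h/\partial f_i)(\partial f_i/\partial\lambda_j)(\partial\lambda_j/\partial\epsilon)\big|_{\epsilon=0}$ out of three factors. The first, obtained from $h = -\sum_k f_k\log_2 f_k$, is $\partial h/\partial f_i = -(\log_2 f_i + 1/\ln 2)$, using $\tfrac{d}{dx}(x\log_2 x) = \log_2 x + 1/\ln 2$. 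The third is supplied directly by Thm~\ref{def:pert_thm}: $\partial\lambda_j/\partial\epsilon|_{\epsilon=0} = \lambda_j'(0) = ({\bf u}^{(j)})^T\Delta L\,{\bf u}^{(j)}$.

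The decisive step is the middle factor $\partial f_i/\partial\lambda_j$, and this is where uniform VNE simplifies relative to the general case. Because the total edge weight is conserved, $S = 2M$ is held fixed, so $f_i = \lambda_i/(2M)$ depends on the $i$th eigenvalue alone and $\partial f_i/\partial\lambda_j = \delta_{ij}/(2M)$. This diagonal Jacobian collapses the double sum to a single sum, giving
\[
H'(0) = -\frac{1}{2M}\sum_i \lambda_i'(0)\left[\log_2\frac{\lambda_i}{\sum_j\lambda_j} + \frac{1}{\ln 2}\right],
\]
and substituting $\lambda_i'(0) = ({\bf u}^{(i)})^T\Delta L\,{\bf u}^{(i)}$ reproduces Eq.~\eqref{eq:vne_perturb3} verbatim. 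I would also note that the $i=1$ term, which nominally involves the undefined $\log_2(\lambda_1/S)=\log_2 0$, is harmless: since $\Delta L$ has zero row sums it annihilates ${\bf u}^{(1)} \propto {\bf 1}$, so $\lambda_1'(0)=0$ and that term drops out under the convention $0\log_2 0 = 0$.

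The point that most needs care --- and the main conceptual obstacle --- is the justification of the diagonal Jacobian, i.e.\ the treatment of $S = \sum_j\lambda_j = 2M$ as constant. For diffusion-kernel VNE the analogous normalization $Z_\beta$ depends on \emph{every} eigenvalue and changes under perturbation, so its Jacobian is full and the sum does not decouple; it is precisely the conservation of total edge weight (as holds, for instance, under the edge-rewiring perturbations studied in~\cite{li2018network}) that makes the uniform case tractable. Relatedly, I would verify that retaining the $1/\ln 2$ term is consistent with this simplification: it contributes $-\tfrac{1}{2M\ln 2}\sum_i\lambda_i'(0) = -\tfrac{1}{2M\ln 2}\Tr(\Delta L)$, which vanishes whenever the perturbation preserves the trace, so the stated formula holds exactly in the trace-preserving setting and departs from the unrestricted derivative only by this single boundary term.
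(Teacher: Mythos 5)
Your core derivation is correct and is essentially the route the paper intends: the paper never proves Thm.~\ref{def:gen_FOP_0d} explicitly (it defers to \cite{li2018network}, and Appendix~\ref{app:proof_vne} only works out the diffusion-kernel case), but the implied argument is exactly your specialization of the chain rule in Lemma~\ref{def:gen_FOP_0}, and your three factors $\partial h/\partial f_i$, $\partial f_i/\partial\lambda_j$, and $\lambda_j'(0)$ are the right ones. Your two refinements are genuinely valuable: the observation that the diagonal Jacobian $\partial f_i/\partial\lambda_j=\delta_{ij}/(2M)$ --- and hence the stated formula --- is tied to trace-preserving perturbations (for a single edge removal $\Tr(\Delta L^{(pq)})=2A_{pq}\neq 0$, so the formula is exact only in the rewiring setting of \cite{li2018network}), and the check that the $\lambda_1=0$ term is harmless because $\Delta L\,{\bf 1}={\bf 0}$ keeps $f_1(\epsilon)\equiv 0$ along the perturbation. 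Neither point is made in the paper.

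However, your closing quantitative claim is wrong. If one does \emph{not} hold $S=\sum_j\lambda_j$ fixed, the Jacobian is $\partial f_i/\partial\lambda_j=(\delta_{ij}-f_i)/S$, and since $\sum_j\lambda_j'(0)=\sum_j({\bf u}^{(j)})^T\Delta L\,{\bf u}^{(j)}=\Tr(\Delta L)$, the unrestricted chain rule gives
\begin{equation*}
H'_{\mathrm{full}}(0)=-\frac{1}{2M}\sum_i \lambda_i'(0)\left[\log_2\left(\frac{\lambda_i}{\textstyle\sum_j\lambda_j}\right)+\frac{1}{\ln 2}\right]+\frac{\Tr(\Delta L)}{2M}\left[\frac{1}{\ln 2}-h(L)\right],
\end{equation*}
equivalently $H'_{\mathrm{full}}(0)=-\frac{1}{2M}\sum_i\lambda_i'(0)\left[\log_2 f_i+h(L)\right]$, i.e., in the true derivative the constant $1/\ln 2$ is replaced by the entropy $h(L)$ itself. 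Thus the stated formula departs from the unrestricted derivative by $\frac{\Tr(\Delta L)}{2M}\left[h(L)-\frac{1}{\ln 2}\right]$, not merely by the term $-\Tr(\Delta L)/(2M\ln 2)$ that you identify: you are dropping the contribution of the off-diagonal (normalization) part of the Jacobian, which is the exact analogue of the $\sum_j f_j(\vec{\lambda})\lambda_j'(0)$ term appearing in Thm.~\ref{def:gen_FOP}, Eq.~\eqref{eq:perturbation}. Both corrections are proportional to $\Tr(\Delta L)$, so your main conclusion --- exactness of Eq.~\eqref{eq:vne_perturb3} in the trace-preserving setting --- survives; only your characterization of the discrepancy outside that setting needs to be fixed.
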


In \cite{li2018network}, Thm.~\ref{def:gen_FOP_0d} was developed to study the effects of graph rewiring on uniform VNE, and in particular, how the distribution of VNE across a random-graph ensemble that is obtained via edge rewiring can converge to that for well-known random graph ensembles, including ensembles associated with the Erd\H{o}s-R\'enyi $G_{NM}$ and configuration models.     Turning our attention back to diffusion-kernel VNE, we obtain the following.

\begin{theorem}[First-Order Perturbation of diffusion-kernel VNE]\label{def:gen_FOP}
Consider the diffusion-kernel density function {$f_i(\vec{\lambda})=s_i$, as defined
according to Eq.~\eqref{eq:expo22},} and the full set of assumptions in Lemma~\ref{def:gen_FOP_0}.  Then the perturbed VNE is given by Eq.~\eqref{eq:taylor_series}  with 
\begin{align}\label{eq:perturbation}
H'(0) &= -\beta \sum_{i = 1}^{N} f_i(  \vec{\lambda}) \left[\log_2(f_i(  \vec{\lambda}))  + \frac{1}{\ln\left(2\right)}\right]\left[-\lambda_i'(0) +  \sum_{j}^{N}f_j( \vec{\lambda})\lambda_j'(0)  \right] ,
\end{align}
where $\lambda_i'(0) =  ({\bf u}^{(i)})^T \Delta  {L} {\bf u}^{(i)}$.
\begin{proof}
See Appendix ~\ref{app:proof_vne}.
\end{proof}
\end{theorem}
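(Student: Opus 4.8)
The plan is to invoke the chain-rule expression for $H'(0)$ already established in Lemma~\ref{def:gen_FOP_0} and to evaluate its three factors for the diffusion-kernel choice $f_i(\vec{\lambda}) = e^{-\beta\lambda_i}/Z_\beta(L)$. Since that lemma reduces the problem to computing $\partial h/\partial f_i$, $\partial f_i/\partial \lambda_j$, and $\lambda_j'(0)$ separately, the argument is essentially a careful bookkeeping exercise in partial differentiation followed by an index contraction. The Taylor form of Eq.~\eqref{eq:taylor_series} and the $\mathcal{O}(\epsilon^2)$ remainder come for free from the lemma, so only the explicit evaluation of $H'(0)$ remains.

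First I would differentiate the entropy $h = -\sum_k f_k \log_2 f_k$ with respect to a single probability $f_i$, treating the $f_k$ as independent coordinates; this yields $\partial h/\partial f_i = -[\log_2 f_i + 1/\ln 2]$, which already reproduces the bracketed factor appearing in the target formula. Next I would differentiate $f_i = e^{-\beta\lambda_i}/Z_\beta(L)$ with respect to $\lambda_j$. The numerator contributes a diagonal term $-\beta f_i \delta_{ij}$, while the normalization $Z_\beta(L) = \sum_k e^{-\beta\lambda_k}$ --- which couples all eigenvalues --- contributes an off-diagonal term $+\beta f_i f_j$, giving the compact expression $\partial f_i/\partial\lambda_j = \beta f_i(f_j - \delta_{ij})$. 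Finally I would substitute $\lambda_j'(0) = ({\bf u}^{(j)})^T \Delta L\, {\bf u}^{(j)}$ from Theorem~\ref{def:pert_thm}.

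With the three factors in hand, I would assemble $H'(0) = \sum_{i,j}(\partial h/\partial f_i)(\partial f_i/\partial\lambda_j)\lambda_j'(0)$ and carry out the sum over $j$ first. The factor $(f_j - \delta_{ij})$ collapses the $j$-sum into $\sum_j f_j\lambda_j'(0) - \lambda_i'(0)$, producing exactly the bracketed combination $-\lambda_i'(0) + \sum_j f_j\lambda_j'(0)$ in the statement; pulling out the common factor $-\beta$ together with the entropy-derivative bracket then completes the derivation and matches Eq.~\eqref{eq:perturbation} term for term.

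The main obstacle --- really the only nontrivial point --- is correctly handling the dependence of the normalization $Z_\beta(L)$ on \emph{every} eigenvalue, which is precisely what distinguishes diffusion-kernel VNE from the uniform case of Theorem~\ref{def:gen_FOP_0d}. For uniform VNE the conserved trace $\sum_k\lambda_k = 2M$ makes $f_i$ depend only on $\lambda_i$, so $\partial f_i/\partial\lambda_j$ is diagonal; here the coupling through $Z_\beta(L)$ generates the $+\beta f_i f_j$ cross-terms whose contraction against $\lambda_j'(0)$ is exactly the mean-subtraction term $\sum_j f_j\lambda_j'(0)$. I would also note in passing that because $f_i = e^{-\beta\lambda_i}/Z_\beta(L) > 0$ for every finite eigenvalue (including $\lambda_1 = 0$), the logarithm $\log_2 f_i$ is well-defined throughout and the $0\log_2 0 = 0$ convention is never actually invoked, so no boundary cases complicate the differentiation.
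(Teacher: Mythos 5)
Your proposal is correct and follows essentially the same route as the paper's proof in Appendix~\ref{app:proof_vne}: both invoke the chain-rule decomposition of Lemma~\ref{def:gen_FOP_0}, compute $\partial h/\partial f_i = -[\log_2 f_i + 1/\ln 2]$ and $\partial f_i/\partial\lambda_j$ (your compact form $\beta f_i(f_j - \delta_{ij})$ is just the paper's two cases, Eqs.~\eqref{eq:i_j} and \eqref{eq:i_i}, written with a Kronecker delta), substitute $\lambda_j'(0)$ from Theorem~\ref{def:pert_thm}, and contract over $j$ to obtain the bracket $-\lambda_i'(0) + \sum_j f_j\lambda_j'(0)$. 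The only differences are notational, so nothing further is needed.
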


In principle, a Laplacian perturbation matrix $\Delta L$ can encode a broad family of structural modifications to a graph. For example, Lemma \ref{cor:spec_pert} in Sec.~\ref{sec:pert0}  showed that  $\lambda_i'(0) = \pm ({\bf u}^{(i)})^T \Delta  {L}^{(p,q)} {\bf u}^{(i)}  = \pm A_{pq}(u_p^{(i)}-u_q^{(i)})^2$ for the addition (+) or removal (-) of an edge $(p,q)$ having weight $A_{pq}$. 
{As shown in \cite{taylor2016synchronization}, a similar result can encode the simultaneous addition/removal for sets of edges.}
Thus, {in principle} one can compute Eq.~\eqref{eq:vne_perturb3} and Eq.~\eqref{eq:perturbation} very efficiently for a given edge modification or a  set of edge modifications. 
See Appendix~\ref{app:valid} for an experiment that provides numerical validation for Thm.~\ref{def:gen_FOP} by estimating the change to VNE induced by removing $k$ edges for $0\le k\le 20$.
%

With this approximation theory in hand, we now return our attention to the ranking of edges based on the change $Q_{pq} = h(L-\Delta L^{(pq)}) - h(L) $ to VNE that would occur after the individual removal of each edge $(p,q)\in\mathcal{E}$. Combining  Thm.~\ref{def:gen_FOP} with Lemma~\ref{cor:spec_pert}, we  approximate
%
\begin{align}\label{eq:app123}
  Q_{pq}\approx \tilde{Q}_{pq} \equiv H'(0),
\end{align}
where  $H'(0)$ is appropriately defined for the particular type of VNE (i.e., as represented by the spectral mapping $f_i(\vec{\lambda})$) and  $\lambda_j'(0) =  -A_{pq}(u_p^{(j)}-u_q^{(j)})^2 $.   Herein, we will focus on diffusion-kernel VNE in which $H'(0)$ is given by  Eq.~\eqref{eq:perturbation}, which uses   $f_i(\vec{\lambda})  =   {e^{-\beta\lambda_i}} /{\sum_{j  }  e^{-\beta\lambda_{j}}}$. In principle, one can use a similar approach to approximate edge rankings based on VNE perturbations for other choices of   $ f_i(\vec{\lambda})$. Finally, we also highlight that due to the linearity of first-order approximations, one can also estimate  the change to VNE that would occur due to the removal of a subset $\mathcal{E}^1\subset \mathcal{E}$ of edges  by
$H'(0) =  \sum_{(p,q)\in \mathcal{E}^1} Q_{pq} .$

Given the approximate perturbations $\{\tilde{Q}_{pq}\}$, we now define an approximate ranking of edges.

\begin{definition}[Edge Rankings by First-Order Approximate  VNE Increases]
Given a graph $ {G}=(\mathcal{V},\mathcal{E})$ with unnormalized Laplacian $L$, consider the approximate changes $\tilde{Q}_{pq}$ to VNE  given by Eq.~\eqref{eq:app123}.

Then,  then we define  the   rankings
\begin{align}
\tilde{R}_{pq} = 1+ | \hat{\mathcal{E}}|,~\text{where}~ \hat{\mathcal{E}} = \{ (n,m)\in\mathcal{E} : \tilde{Q}_{nm} > \tilde{Q}_{pq}\}\label{eq:rank1_approx}
\end{align}
so that $\tilde{R}_{pq}\in\{1,\dots,|\mathcal{E}|\}$ gives the approximate rank of each edge $(p,q)\in\mathcal{E}$. 
\end{definition}
We summarize the computation of this ranking in Algorithm~\ref{alg:approximate}.

\begin{algorithm}[t]
\caption{{\bf Edge Rankings by First-Order-Approximate VNE Increases}}
\begin{algorithmic}[1]\label{alg:approximate}
\REQUIRE Graph $G(\mathcal{V},\mathcal{E})$ with nodes $\mathcal{V}$, weighted edges $\mathcal{E}$, and unnormalized Laplacian $L$.
\ENSURE Cardinality $|\mathcal{E}|>0$ 
\STATE Compute set $\{\lambda_i\}$ of eigenvalues for $L$.
\STATE Compute VNE $h(L)$ according to Eq.~\eqref{eq:vne1}.
\FOR{$(p,q)\in \mathcal{E}$}
\STATE {Compute the perturbation matrix $\Delta L^{(pq)}$ according to Eq.~\eqref{eq:DL}}.
\STATE Compute the approximate change $\tilde{Q}_{pq}$ according to Eq.~\eqref{eq:app123} {using $L$ and $\Delta L^{(pq)}$}.
\ENDFOR
\STATE {Use Eq.~\eqref{eq:rank1_approx}} on the set $\{\tilde{Q}_{pq}\} $ to obtain the set  $\tilde{R}_{pq}\}$ of approximate rankings.
\end{algorithmic}
\end{algorithm}

Before continuing, we highlight that one possible limitation of this approach is that Lemma \ref{def:gen_FOP_0},   Thm.~\ref{def:gen_FOP_0d} and Thm. \ref{def:gen_FOP}
define $H'(0)$ in a way that assumes  that the  eigenvalues $\lambda_j$ of   $L$ are simple {by virtue of Thm.~\ref{def:pert_thm}}. We propose  to neglect the contributions of repeated eigenvalues as a simple heuristic.
Moreover, we predict that if eigenvalue repetition occurs, it is more likely to arise for   large eigenvalues (i.e., for which $e^{-\beta \lambda_j}$ is  small), as opposed to the  smaller, more important   eigenvalues for which the  $e^{-\beta \lambda_j}$ terms are larger. 
Therefore, we  expect that the neglection of $\lambda_i'(0)$ terms for repeated eigenvalues will have a particularly small effect on an estimate $H'(0)$ for diffusion-kernel VNE and the subsequent rankings of edges. 
{However, while Algorithm~\ref{alg:actual} does not make any assumption about repeated eigenvalues, Algorithm~\ref{alg:approximate} does make such an assumption which limits its usability.  That said, we have yet to encounter repeated eigenvalues in our experiments presented here, but such situations are known to occur for some networks.}

\subsection{Comparison of Algorithms~\ref{alg:actual} and  \ref{alg:approximate}}\label{sec:runtime}

Recall that our main motivation for developing approximate rankings was that it is can be computationally expensive (or infeasible) to directly compute the set $\{Q_{pq}\}$, which requires one to compute the exact change to VNE that occurs after removing each edge $(p,q)$. More specifically, upon each edge removal, we must recompute all eigenvalues of a size-$N$ matrix.
{Computing all the eigenvalues of a matrix is known to be computationally expensive, and such an algorithm must be implemented $M$ separate times for $M$ distinct matrices if one considers the removal of $M$ edges. The QR algorithm, for example, requires $\varpropto N^3$ flops \cite{golub2013matrix}, implying that the computation time grows proportionally to $N^3$ for a network with $N$ nodes\footnote{{Notably, we implemented our algorithms using the NumPy Python package, and the eigenpair computations are based on LAPACK \cite{harris2020array}, for which the asymptotic runtime complexity has a lower bound $\Omega(N^2)$ and upper bound $\mathcal{O}(N^3)$ \cite{demmel2008lapack}.  Computations were conducted on a Dell Precision 3650 computer that has a 2.80 GHz Intel Xeon W-1390 processor with 64 GB RAM.}}. 
(Of course, speeding the computation of spectra for large sparse matrices  remains an active research field, and we highlight several leading approaches in Sec.~\ref{sec:Discuss}.)
}
In contrast, the approximate rankings utilize spectral perturbation theory and the eigenvalues of a size-$N$ matrix only must be computed once. Then for each edge, one   computes $\tilde{Q}_{pq}=H'(0)$ using Eq.~\eqref{eq:perturbation}, which has $\mathcal{O}(N)$ complexity and is a much more efficient approach.  {We also note  for both algorithms that there is an additional start-up cost {from having} to compute the spectra of the original graph Laplacian.}

In Fig.~\ref{fig:runtime}, we compare Algorithms~\ref{alg:actual} and  \ref{alg:approximate} in two ways. First, we show in Fig.~\ref{fig:runtime}(A)  that the approximate rankings $\{\tilde{R}_{pq}\}$ are much more efficient to compute   than the true rankings $\{{R}_{pq}\}$ for an example graph. Second, we show in Fig.~\ref{fig:runtime}(B)   that these two sets of rankings can be very similar $\tilde{R}_{pq} \approx R_{pq} $. Their similarity   is somewhat expected, since  $\tilde{Q}_{pq}\approx Q_{pq}$ by construction.

\begin{figure}[h]
\centering
\includegraphics[width= \columnwidth]{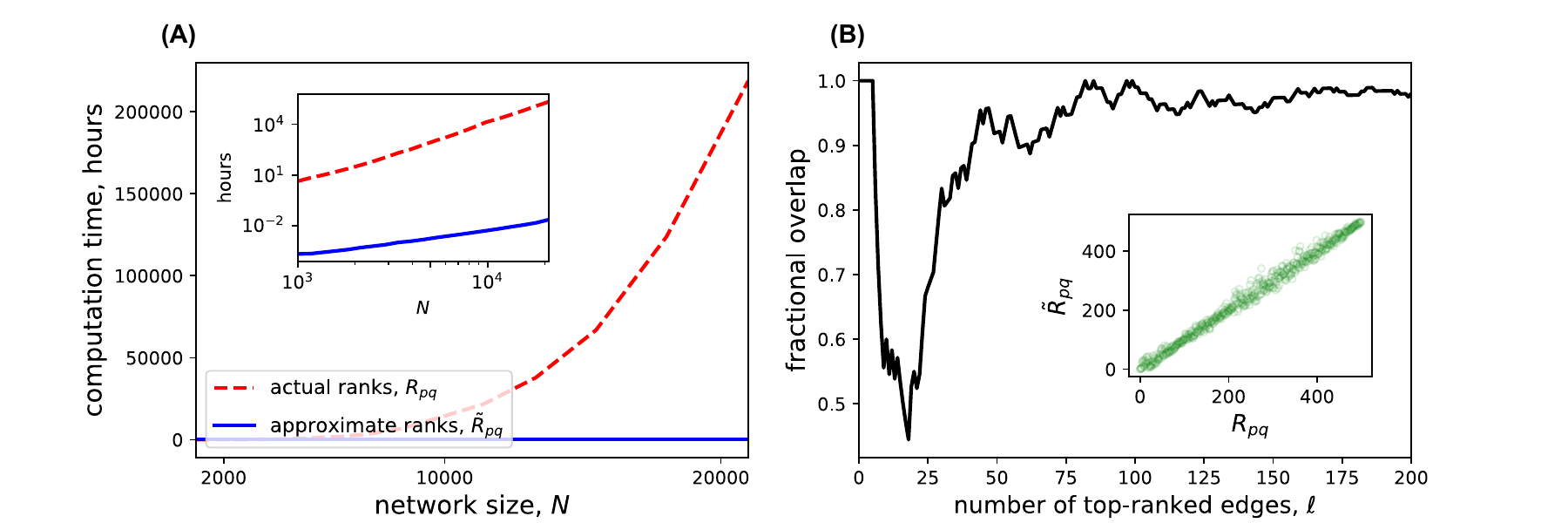}
\label{fig:runtime}
\vspace{-.5cm}
\caption{{\bf Comparison of Algorithms~\ref{alg:actual} and \ref{alg:approximate}.}
{\bf (A)} We estimate the runtimes for {$\beta = 1$} that are required to execute these algorithms for  Erd\H{o}s--R\'enyi $G_{NM}$ graphs with increasing size with  $N$ nodes and $M=5N$ edges, so that   mean degree remains fixed as $N$ increases.
The inset is a logarithmic scale to reveal their runtimes' scaling behaviors (see text). Note that  Algorithm~\ref{alg:actual} would potentially take hundreds of hours for each $N$, so we extrapolated these times from that required to compute for a single edge, instead of running the algorithm to completion. In contrast, Algorithm~\ref{alg:approximate}  completed in under a minute for all values of $N$ plotted.
{\bf (B)} We compare the fractional overlap $|\mathcal{E}^{(\ell)} \cap \tilde{\mathcal{E}}^{(\ell)}| / \ell$ for the $\ell$  top-ranked edges for the two rankings $R_{pq}$ and $\tilde{R}_{pq}$ in a $G_{NM}$ graph with $N=100$ nodes and $M=1000$ edges {at $\beta = 1$}. The inset shows that they are strongly correlated: $R_{pq}\approx \tilde{R}_{pq}$ for all edges $(p,q)\in\mathcal{E}$. We computed a  Pearson correlation coefficient of $0.996$.
}
\end{figure}

In Fig.~\ref{fig:runtime}(A), we support our predictions for the computational complexity for both algorithms by studying the scaling behavior for the algorithms' runtimes for graphs of increasing size $N$.  We considered Erd\H{o}s--{R\'enyi} $G_{NM}$ graphs with   $N$ nodes and $M=N\langle d\rangle/2$ edges, where $ \langle d\rangle=10$ is   the mean degree. We then generated a sequence of graphs    with increasing size $N$, and for each graph we empirically estimated the runtime required to compute the rankings $R_{pq}$ and $\tilde{R}_{pq}$. For the approximate rankings $\{\tilde{R}_{pq}\}$, we computed them directly using Algorithm~\ref{alg:approximate}. However, we found that it was computationally infeasible to directly run Algorithm~\ref{alg:actual} for these graphs with sizes $N>2000$, and so we instead estimated the algorithms' runtime by measuring the time required to  compute $Q_{pq}$ for a single edge. Then we approximated the total run time by multiplying this duration by $M$, since the algorithm requires us to compute $Q_{pq}$ for all edges.
We considered $N\in[2,000,20,000]$ and found that in all cases Algorithm~\ref{alg:approximate} completes in less than a minute, whereas Algorithm~\ref{alg:actual} would take hundreds of hours to perform all of these computations (i.e., were we to attempt running it in its entirety).
The subpanel in Fig.~\ref{fig:runtime}(A) showing the runtimes in a log-log scale, and we applied least-squares linear fits to empirically estimate the {scaling to be $\propto MN^{2.813}$} for Algorithm~\ref{alg:actual} and {$\propto MN^{0.519}$} for Algorithm~\ref{alg:approximate}.  
{We note that these observing scaling behaviors are {slightly} less than {our} predicted 
scaling behaviors:
 { $\varpropto  MN^3$ } for Algorithm~\ref{alg:actual} and {$\varpropto MN $} for Algorithm~\ref{alg:approximate}.
%
%
Notably,  our experiments were limited to graphs with  $N = 20,000$ nodes, and we {expect} that the consideration of larger graphs would yield observed scaling rates closer to these {asymptotic} predictions.
}

In Fig.~\ref{fig:runtime}(B), we provide evidence to support our claim that the approximate and  true rankings are similar: $\tilde{R}_{pq} \approx R_{pq} $.  Here, we computed the rankings ${R}_{pq} $ and $ \tilde{R}_{pq} $ for an Erd\H{o}s--R\'enyi $G_{NM}$ graph {\cite{erdos1959random}} with  $N = 100$ nodes and $M=1000$ edges. For each set of rankings, we considered $\ell$ top-ranked edges and measured the \emph{fractional overlap}, {also known as the \emph{precision at $\ell$} metric \cite{clough2013evaluating}}, between these two sets, or more precisely, the size of their overlap. That is, letting ${\mathcal{E}}^{(\ell)} $ and $\tilde{\mathcal{E}}^{(\ell)}$ be the sets of top-ranked edges according to $R_{pq}$ and $\tilde{R}_{pq}$, respectively, we plot $ {|\mathcal{E}^{(\ell)} \cap \tilde{\mathcal{E}}^{(\ell)}|}/{\ell}$ versus the set size $\ell$.
Note that this is a fraction that lies between 0 and 1, since   the sets of top-ranked edges have the same cardinality: $\ell = |{\mathcal{E}}^{(\ell)}|=|\tilde{\mathcal{E}}^{(\ell)}|$. Observe in Fig.~\ref{fig:runtime}(B) that this fraction is very large for many values of $\ell$, implying that the rankings $\{R_{pq}\}$ and $\{\tilde{R}_{pq}\}$ are very similar for this graph. This is further shown in the subplot, where we provide a scatter plot that compares $\tilde{R}_{pq}$ versus ${R}_{pq}$ for each edge $(p,q)\in\mathcal{E}$.

{In Appendix~\ref{app:valid2}, we extend and recapitulate the experimental findings shown in Fig.~\ref{fig:runtime}(B) by obtaining similar results for two empirical networks that will be later described in Sec.~\ref{sec:emp_exp}.}
Given the {strong} similarity between rankings $\{R_{pq}\}$ and $\{\tilde{R}_{pq}\}$, and the observation that computing $\{R_{pq}\}$ directly using Algorithm~\ref{alg:actual} is inefficient (or infeasible) for large graphs,  we will focus our attention to studying the approximate rankings $\{\tilde{R}_{pq}\}$ given by Algorithm \ref{alg:approximate} for the remainder of this paper.

\section{Three empirical case studies reveal structural/dynamical mechanisms that influence the   importance of edges}\label{sec:emp_exp}
%
Here,  we apply our VNE-based measures for edge importance to three empirical network datasets:
(Sec.~\ref{sec:congress_network}) a voting-similarity network for the U.S. Senate, 
(Sec.~\ref{sec:transit_network}) a multimodal transportation system, and 
(Sec.~\ref{sec:brain_network}) a multiplex brain network. 
In each empirical network, we will show that the edges that are deemed most important   can drastically change by considering different   timescale parameters $\beta$.
In so doing, we will examine the crucial role that is played by  the interplay between $\beta$
and some network property:  
(Sec.~\ref{sec:congress_network}) community structure due to political polarization, 
(Sec.~\ref{sec:transit_network})   edge weights that encode different speeds  along roads and metro lines, and
(Sec.~\ref{sec:brain_network}) the strength of coupling between   layers of a multiplex  network. 
{In Sec.~\ref{app:centrality_comparison}, we compare our proposed centrality measure to other edge centralities for these three empirical case studies.}
%

\subsection{U.S. Senate voting-similarity network with community structure}\label{sec:congress_network}

In our first experiment, we study the importance of edges according to Algorithm~\ref{alg:approximate} for a network that encodes voting similarity among U.S. Senators, and  we will compare the rankings  of \emph{interparty edges} that connect Senators in different parties to the rankings of \emph{intraparty edges} between Senators in the same party.  More generally, we ask the following: which edges are most important in a graph that contains community structure, the edges between communities or the ones inside of communities? As we shall show, the answer will depend sensitively  on  the timescale of the dynamics considered.

For brevity, we defer a detailed description of the voting-similarity network that we study to   Appendix~\ref{app:con_weights}. Here, we provide a summary. We created a Python Module called \emph{VoteView-python} \cite{voteview_code} that extracts data from the VoteView repository \cite{lewis2018voteview} that summarizes the voting records of members of the U.S. Congress. 
Following techniques similar to those in \cite{waugh2009party,mucha2010communities,mucha2013polarization}, we constructed a graph with an associated adjacency matrix such that each entry $A_{ij}$ encodes the fraction of bills in which Senators $i$ and $j$ vote identically. We restrict our attention to  the 117th U.S. Senate considering bills from January 3, 2021 until June 30, 2022 (noting that the 117th Congress had not yet concluded when we conducted this experiment.) We also thresholded the matrix, setting $A_{ij}$ to zero for any fraction less than $0.4$. The resulting procedure yielded a graph with $N=100$ nodes and $|\mathcal{E}|=2,656$ undirected, weighted edges. Due to strong party polarization, the graph contains two large-scale communities that indicate, respectively, the Republican and Democratic parties. (Independents are incorporated into the community associated with the Democrats, in agreement with how those two Senators currently caucus.)

\begin{figure}[t]
\centering
  \includegraphics[width=.85\linewidth]{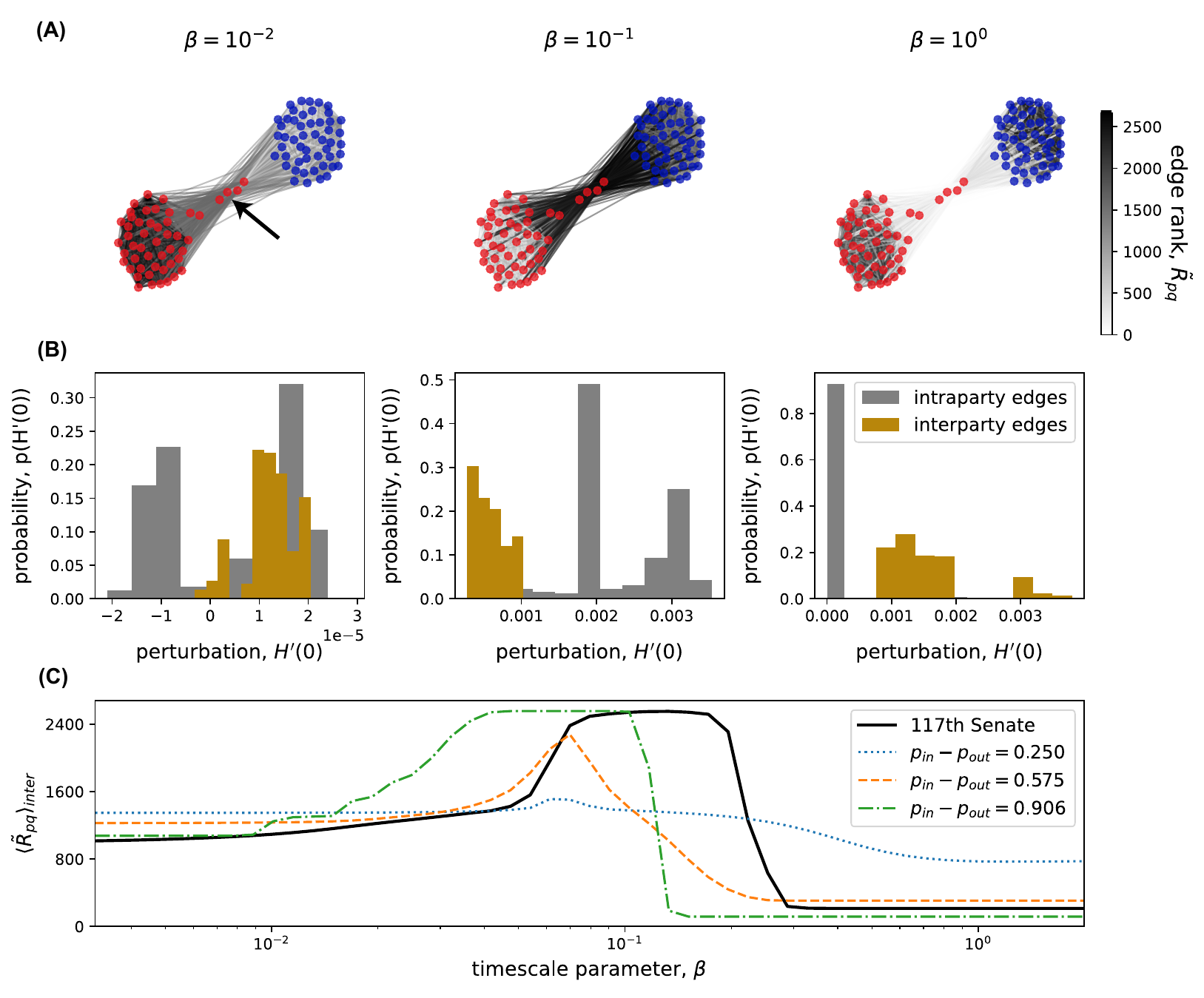}
  \vspace{-.5cm}
    \caption{ {\bf Rankings of inter- and intraparty edges for a polarized political network.}
	\textbf{(A)}~We  visualize the  rankings $\tilde{R}_{pq}$ of edges given by Algorithm~\ref{alg:approximate} using  grayscale   for a graph that encodes voting similarity among   the  117th U.S. Senate. The three columns depict three choices of timescale parameter  $\beta$. Nodes represent  Senators and the node colors red and blue indicate, respectively, the two major political parties: Republicans and Democrats (including independents).
   %
    \textbf{(B)}~For   the same three values of $\beta$, we plot empirically measured distributions $p(H'(0))$ of the approximate   change $H'(0)$  to diffusion-kernel VNE     given by Eq.~\eqref{eq:perturbation}. In each panel, we show two distributions: one in which $p(H'(0))$ is measured across intraparty edges, and one in which $p(H'(0))$ is measured across interparty edges.
    %
    \textbf{(C)} The solid black curve depicts the mean rank $\langle \tilde{R}_{pq}\rangle_{inter}$ across interlayer edges versus $\beta$. 
The colored curves indicate the values of $\langle \tilde{R}_{pq}\rangle_{inter}$ for a comparable random-graph model with two communities and different amounts of connectivity between communities (see text). 
Our main finding is that the interparty edges have the top rankings for larger $\beta$, the lowest rankings for intermediate $\beta$, and intermediate rankings for smaller $\beta$.
%
    }
\label{fig:congress_fig}
\end{figure}

In Fig.~\ref{fig:congress_fig}, we study the effect of $\beta$ on the rankings $\tilde{R}_{pq}$ given by Algorithm~\ref{alg:approximate} for intralayer and interlayer edges for the voting-similarity network.
In Fig.~\ref{fig:congress_fig}(A), we visualize the voting-similarity network and use edge color (in grayscale) to depict the $\tilde{R}_{pq}$ values for three choices for   $\beta\in\{10^{-2},10^{-1},10^0\}$. Senators' party affiliations (i.e., Republican vs.\  Democrats and independents) are indicated by the node colors (red vs.\ blue). Observe that the resulting network contains two well-separated communities due to party polarization.
That said, there is an arrow pointing to a set of Senators $\{$Susan Collins (R), Lisa Murkowski (R), Rob Portman (R), Mitt Romney (R), Shelley Capito (R),  Lindsey Graham (R)$\}$ whose voting patterns are not as strongly polarized along party lines. These Senators are found to have many edges to both Republicans and Democrats, and as such, their node degrees are approximately 12x higher than those for other Senators. 
%

Observe in the right-most column of Fig.~\ref{fig:congress_fig}(A) for $\beta=1$   that the top-ranked edges are the  {interparty edges} that connect Senators in different political parties. Interestingly, for $\beta = 10^{-1}$ (center column) the interparty edges are the ones with  the lowest rankings, and for $\beta=10^{-2}$ (left column) the rankings of interparty and {intraparty  edges} are similar. 
This nonlinear effect of $\beta$ on $\tilde{R}_{pq}$ is further supported in panels (B) and (C).
%
%
In Fig.~\ref{fig:congress_fig}(B), we display a probability distribution $P(H'(0))$ of our approximate perturbations $H'(0)$ given by Eq.~\eqref{eq:perturbation}  for the same three values of $\beta$.  Particularly, we separate the distributions according to interparty edges (gold) and intraparty edges (grey).  For small $\beta$, these two distributions have a similar support.  For medium $\beta$, we observe that   the  perturbations to VNE due to removing intraparty edges are larger than those for interparty edges.  Similar to panel (A), we observe for large $\beta$ that the ordering flips---that is,  the largest perturbations are now due to the removal of interparty edges as opposed to   intraparty edges.  These   findings   corroborate our network visualization in panel  (A).
%
Lastly, in Fig.~\ref{fig:congress_fig}(C) we use a solid black curve to plot  the average  ranking $\langle \tilde{R}_{pq} \rangle_{inter} $ of interparty edges as a function of $\beta$.  
As expected based on our findings in panels (A) and (B), $\langle \tilde{R}_{pq} \rangle_{inter} $ takes on an intermediate value for small $\beta$, is very large for a medium value of $\beta$, and then is very small for large $\beta$.

To gain {deeper} insight, in Fig.~\ref{fig:congress_fig}(C) we also plot    $\langle \tilde{R}_{pq} \rangle_{inter} $ values for a comparable random-graph model yield graphs with two communities, each containing 50 nodes (see the colored dotted, dot-dashed, and dashed  curves).
%
%
Specifically, we study  a  stochastic block model  with $N=100$ nodes and with two communities of equal size. Nodes $\{1,\dots,50\}$ are assigned to community 1 and nodes $\{51,\dots,100\}$ are assigned to community 2. Then, we create \emph{intracommunity edges}   between pairs of nodes in the same community uniformly at random with probability $p_{in}$ and \emph{intercommunity edges} are created between pairs of nodes in different communities uniformly at random with probability $p_{out}$. 
We fix the mean   $\left(p_{in} + p_{out}\right)/2$ so that the expected number of edges, $N(N-1)\left(p_{in} + p_{out}\right)/2$, matches that for the voting similarity network. We then considered three choices for  the difference $(p_{in} - p_{out})\in \{0.25,0.575,0.911\}$, which is a measure for structural polarization.
We note that the last value, 0.911,  closely matches the empirically measured difference in edge probability for intraparty and interparty edges for the voting-similarity network. And in fact,  stochastic block models are a popular graphical model for political polarization \cite{shekatkar2018ImportanceOI} and other sources for community structure \cite{peixoto2017nonparametric,abbe2017community}.
%
 In Fig.~\ref{fig:congress_fig}(C), we plot  $\langle \tilde{R}_{pq} \rangle_{inter} $ (now corresponding to intercommunity edges) for three choices of $(p_{in} - p_{out})$. 
Observe for the larger two values of $ (p_{in} - p_{out})$  that as $\beta$ increases, the average  $\langle \tilde{R}_{pq} \rangle_{inter} $ exhibits a peak and then becomes very small. This phenomenon is very similar to what we observe for the voting similarity network (black solid curve). Thus, this behavior for rankings $ \tilde{R}_{pq} $ appears to occur due to the presence of community structure (that is, and not due to some other possible structural property of the empirical voting-similarity network).


\subsection{Multimodal transportation system with  roads and metro lines that have different speeds}\label{sec:transit_network}
%
In our next case study, we study the   importance of edges for a multimodal transportation network that encodes roads and metro lines in London ~\cite{taylor2015contagion}. In this data (which is available at \cite{dane_code}), {2217} vertices represent intersections and there are two types of edges {($15$ metropolitan lines and $2854$ roads)} to encode these two {modes} of transportation. Notably, the metro lines connect 11 major stations in London, {and} their locations were mapped to the nearest road intersection.   

Given these two types of edges, in this section we study which has higher rankings, roads or metro lines, and we investigate how these rankings $\tilde{R}_{pq}$ change as one varies the relative speed between roads and metro lines. That is, in addition to studying how the rankings $\tilde{R}_{pq}$ change for different values of the timescale parameter $\beta$, we will also introduce    \emph{balancing parameter} $\chi\in[0,1]$ that controls the extent to which metro lines are faster or slower than roads. 
More specifically, we define an undirected, weighted transportation network with an adjacency matrix
\begin{align}
{\bf A} = \chi {\bf A}_{metro} + \left(1 - \chi\right) {\bf A}_{road},
\end{align}
where  ${\bf A}_{metro}$ and ${\bf A}_{road}$ are adjacency matrices in which edges only encode metro lines and roads, respectively. Note that there is no diffusion across metro lines (or roads) in the limit $\chi\to0$ (or $\chi\to1$).
Finally, we note that it is important to study different relative speeds of roads and metro lines, since it has been shown, for example,  to significantly influence the locations where congestions can occur \cite{strano2015multiplex}.
%

In Fig.~\ref{fig:transport_fig}(A), we provide a visualization of the multimodal transportation network with node locations reflecting their geospatial coordinates (i.e., latitudes and longitudes).  
Thick and thin lines represent metro lines and  roads, respectively, and the edge colors reflect  their rankings $\tilde{R}_{pq}$ for the choices $\chi = 0.6$  and $\beta=10^{-1}$.
In Fig.~\ref{fig:transport_fig}(B), we present a similar network visualization, except that we now consider a larger timescale parameter:  $\beta=10^0$. By comparing panels (A) and (B), observe for $\beta=10^{-1}$ that the top-ranked edges are metro lines. In contrast, metro lines  have the lowest rankings for $\beta=10^{0}$.


In Fig.~\ref{fig:transport_fig}(C), we plot
the average ranking $\langle \tilde{R}_{pq}\rangle_{metro}$ across metro lines versus the timescale parameter $\beta$. The three thick colored curves reflect three choices for $\chi$. The thin curves depict the $\tilde{R}_{pq}$ values for each individual metro line. 
In Fig.~\ref{fig:transport_fig}(D), we show  similar information by plotting $\langle \tilde{R}_{pq}\rangle_{metro}$ versus $\chi$ for a few choices of $\beta \in [0.01,3]$.
Together, panels (C) and (D) highlight that the metro lines have the top rankings when $\beta$ is sufficiently small and $\chi $ is sufficiently large.  That is, transport across metro lines must be sufficiently faster than that across roads, and one only   considers diffusion at a short timescale.

From a topological perspective, metro lines introduce long-range, `short-cut' connections across a road network that is largely a two-dimensional geometric substrate \cite{taylor2015contagion}. We find it   interesting that the   importance of such long-range edges {is} so sensitive to timescale. We hypothesize that this phenomenon may be related to the following observation: if someone moves between two spatially distant nodes  across a small timescale $\beta$, then they  must have traversed a metro line. This  is no longer true   at a large timescale $\beta$, since there are many paths that one  could take. Thus, one might expect short-cut edges to have a greater impact on diffusion dynamics (and rankings derived therefrom) at shorter timescales versus longer timescales.

\begin{figure}[t]

\centering
  \includegraphics[width=1\linewidth]{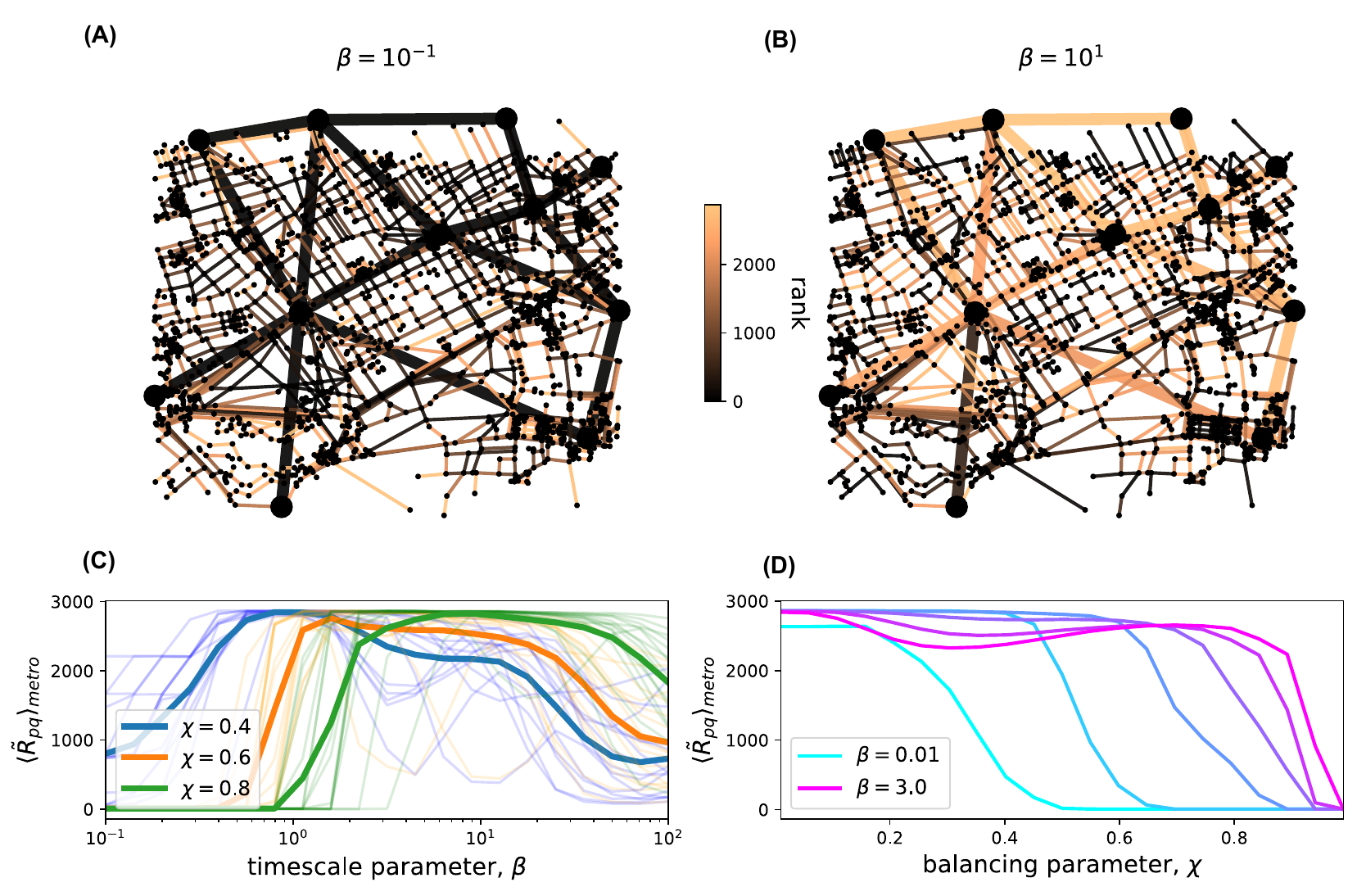}
  \vspace{-.7cm}
    \caption{{\bf Transportation network with  metro lines and roads having different speeds.}  \textbf{(A)} Visualization of a London transportation system in which metro lines and roads are depicted by thick and thin lines, respectively. Edge colors depict their rankings $\tilde{R}_{pq}$ for a small timescale $\beta=10^{-1}$.
\textbf{(B)} Same information as panel (A), except for a larger timescale $\beta=10^{0}$. Observe that the metro lines have the highest  rankings in (A), but the lowest rankings in (B).
%
   \textbf{(C)} Thick curves depict the average edge ranking $\langle \tilde{R}_{pq}\rangle_{metro}$ across metro lines versus $\beta$ for three choices of the \emph{tuning parameter} $\chi$ (which controls whether diffusion is faster over metro lines or roads). Each  thin curve indicates how the ranking $\tilde{R}_{pq}$ for each metro line varies with $\beta$.
   \textbf{(C)} Similar information as panel (C), except we now plot   $\langle \tilde{R}_{pq}\rangle_{metro}$ versus $\chi$, and different curves indicate  several  choices of $\beta$.
   In general, we find that metro lines (which introduce long-range ``short-cuts' across the road network) are the top-ranked edges provided that $\chi$ is sufficiently large and $\beta$ is sufficiently small.
    }
\label{fig:transport_fig}
\end{figure}

\subsection{Multiplex Brain Network with Interlayer Coupling}\label{sec:brain_network}
In our final case study, we study the   importance of edges for a multiplex brain network in which  nodes represent brain regions and different \emph{network layers} encode coordinated brain activity patterns at different frequency bands. Our focus of this section is to investigate how the edge rankings $\tilde{R}_{pq}$  change as one varies the strength $\omega\ge0$ of coupling between layers (as well as the timescale parameter $\beta$).

We study an empirical network   obtained  from \cite{brain_data} that was constructed from functional magnetic resonance imaging (fMRI) data   in which nodes encode   regions of interest (ROI) in human brains and edges encode a measure for spectral coherence between fMRI signals that are aggregated across each ROI
\cite{guillon2017loss}.   
In particular,   the frequency content of an fMRI signal for each ROI was decomposed into different frequency bands, and then coherence relationships were identified for each frequency band. The result was a  set of \emph{network layers} in which each layer encodes spectral-coherence relationships at a particular frequency band.

\begin{figure}[t]
\centering
  \includegraphics[width=1\linewidth]{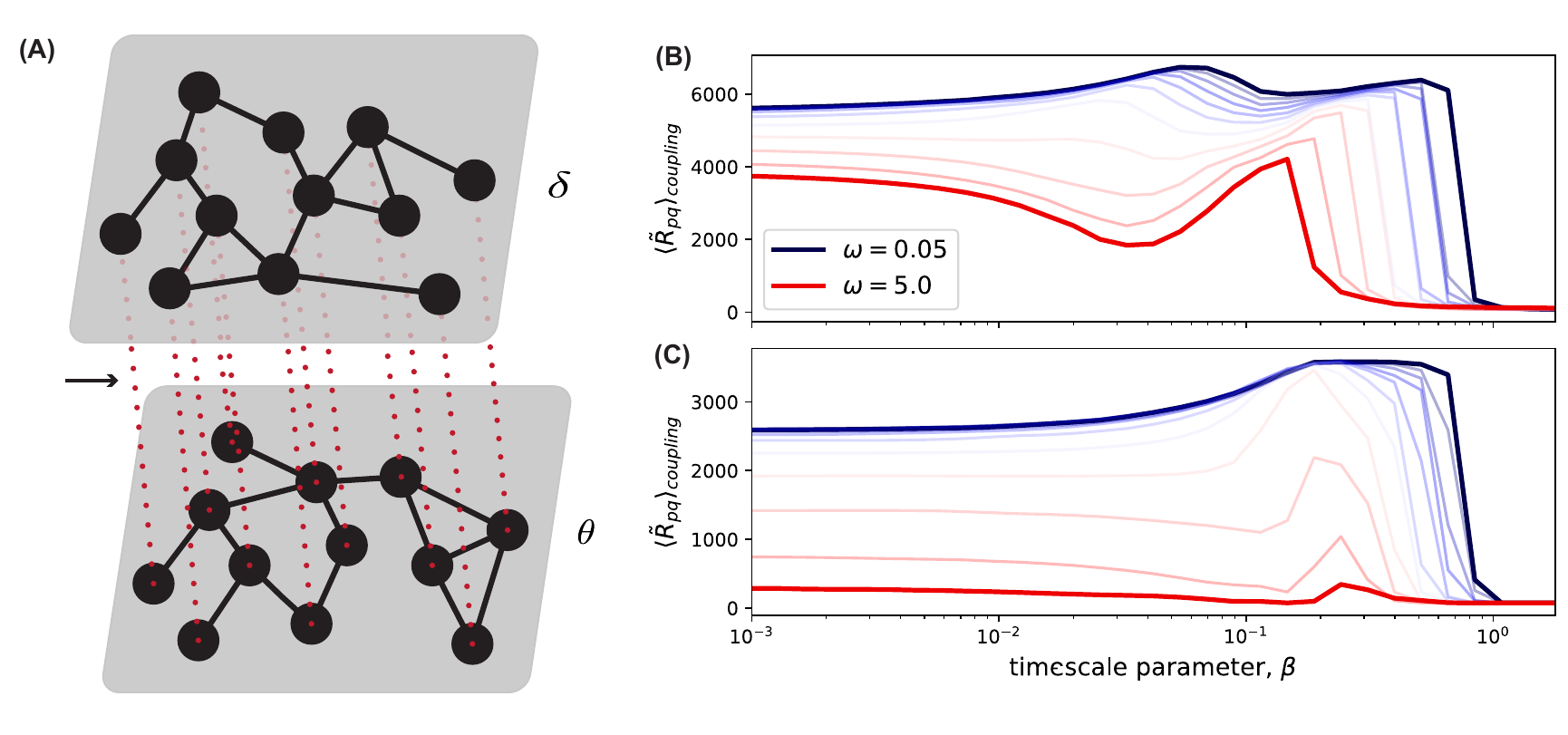}
  \vspace{-.5cm}
    \caption{{\bf Multiplex network representing coherent fMRI signals among brain regions.}
    \textbf{(A)} Toy illustration of a multiplex network with two layers representing two frequency bands: $\delta$ and $\theta$ waves. The solid black  and red dotted lines represent, respectively, intralayer and interlayer edges. Interlayer edges are weighted by a coupling strength $\omega\ge0$.
    \textbf{(B)} Average ranking $\langle \tilde{R}_{pq} \rangle_{inter}$  across interlayer edges versus  $\beta$ for several choices of $\omega\in[0.05,5]$. 
    \textbf{(C)} Similar information as panel (B), except we now consider a synthetic model in which each layer is an Erd\H{o}s-R\'enyi $G_{NM}$ random graph in which $N$ and $M$ match the values for the empirical  network layers.
    Observe   a similar trend in panels (A) and (B):  interlayer edges have the top rankings when $\beta$ is very large, and these rankings take on their lowest values when $\beta$ has some intermediate value.}
\label{fig:brain_fig}
\end{figure}

Focusing on the frequency bands of $\delta$-waves (2-4 Hz) and $\theta$-waves (4.5-7.5 Hz), we considered
two adjacency matrices  ${\bf A}^{(\delta)}$ and ${\bf A}^{(\theta)}$, respectively.
Both matrices are size $N=148$ and encode two types of relationships among a common set of nodes $\mathcal{V}=\{1,\dots,N\}$. 
That is, a nonzero matrix entry ${A}^{(\delta)}_{ij}$ encodes a relation between brain regions $i$ and $j$ at the $\delta$ frequency band, whereas  ${A}^{(\theta)}_{ij}$ encodes a similar spectral-coherence relationship, but at the $\theta$ frequency band. To focus our study on sparse networks, we also applied a threshold to these matrices as described in Appendix~\ref{app:brain_weights}. This resulted in each layer containing $M=1,652$ undirected, weighted edges.

Given these two network layers encoded by   ${\bf A}^{(\delta)} $ and ${\bf A}^{(\theta)} $, which we refer to as \emph{intralayer adjacency matrices}, we   then constructed a two-layer multiplex network by coupling them together. Specifically, we consider a \emph{supra-adjacency matrix} 
\begin{equation}\label{eq:supra}
     \mathbb{A} = \begin{bmatrix} {\bf A}^{(\delta)} & 0 \\ 0 & {\bf A}^{(\theta)} \end{bmatrix} + \omega \begin{bmatrix} 0 & {\bf I} \\ {\bf I} & 0 \end{bmatrix},
\end{equation}
where ${\bf I}$ is the identity matrix and $\omega\ge 0$ is an \emph{interlayer coupling strength} that tunes how strongly the layers are coupled together. Coupling matrices in this way is called uniform, diagonal coupling.

As visualized in  Fig.~\ref{fig:brain_fig}(A), coupling network layers following Eq.~\eqref{eq:supra}  introduces \emph{interlayer edges} (red dotted lines) that connect each node $i$ in the first layer to itself in the second layer. (The two network layers are indicated by the gray shaded regions, and the set of nodes is the same for both layers.) That is,   there is one interlayer edge connecting each node in one layer to itself in the other layer. 
In contrast, we will   refer to the edges encoded by  ${\bf A}^{(\delta)} $ and ${\bf A}^{(\theta)} $ as \emph{intralayer edges}. Notably, interlayer edges have a weight given by $\omega$, whereas the possible weights for intralayer edges are encoded by the matrix entries in ${\bf A}^{(\delta)} $ and ${\bf A}^{(\theta)} $.

Similar to our study in Sec.~\ref{sec:congress_network} (where we examined the rankings of interparty edges), we will now pay particular attention to the rankings $\tilde{R}_{pq}$ of interlayer edges. We ask the following: When do the interlayer edges have the top rankings?
In Fig.~\ref{fig:brain_fig}(B), we plot the average ranking $\langle \tilde{R}_{pq} \rangle_{inter}$  across interlayer edges versus the timescale parameter $\beta$. Different curves indicate different choices of the interlayer coupling strength $\omega\in[0.05,5]$. Observe that the interlayer edges have the top rankings only when $\beta$ is very large. Moreover, for intermediate values of $\beta$,  there exists a peak that corresponds to when the interlayer edges obtain their lowest rankings. Interestingly, these two phenomena are reminiscent of our findings for interparty edges that we presented in Sec.~\ref{sec:congress_network}. By comparing the different curves for different  $\omega$, we can observe that the  peak is sharper  for larger $\omega$ values, whereas it is more of a  plateau for smaller $\omega$ (and can even have two optima). In Appendix~\ref{app:brain_weights}, we show that qualitatively similar results occur for 24 other multiplex brain networks taken from \cite{guillon2017loss}.


In Fig.~\ref{fig:brain_fig}(C), we support these findings by studying $\langle \tilde{R}_{pq} \rangle_{inter}$   for a random-graph model in which the two layers are given by  Erd\H{o}s-R\'enyi  $G_{NM}$ networks for $N = 148$ and $M = 7,052$ in which the number of edges $M$ is chosen so that the numbers of intralayer edges in each layer matches those for ${\bf A}^{(\delta)} $ and ${\bf A}^{(\theta)}$. 
We plot $\langle \tilde{R}_{pq} \rangle_{inter}$ versus $\beta$ for this generative model in Fig.~\ref{fig:brain_fig}(C) for different choices of $\omega$, where one can observe that the behavior is qualitatively similar that which was shown in Fig.~\ref{fig:brain_fig}(B). Despite this similarity, there are some differences between the curves panels (B) and (C), which could result because the empirical network contains rich structural features (e.g.,  heterogeneous  weights for intralayer edges and non-random edges) that are lacking from the simple model that we study in panel (C).

\subsection{{Comparison to other edge centralities for the case studies}}\label{app:centrality_comparison}
{Here, we provide additional insight about our VNE-based edge centrality  by comparing Algorithm~\ref{alg:approximate} to three other edge centralities:
current flow \cite{brandes2005centrality}, betweenness \cite{newman2010networks}, and an extension of  node degree centrality whereby a centrality score is computed for each edge $(i,j)$ by adding  the degrees of nodes $i$ and $j$. 
We make this comparison for   three networks from the three case studies studied above.


%

\begin{figure}[h]
\label{fig:centrality_comparison}
\centering
  \includegraphics[width=1\linewidth]{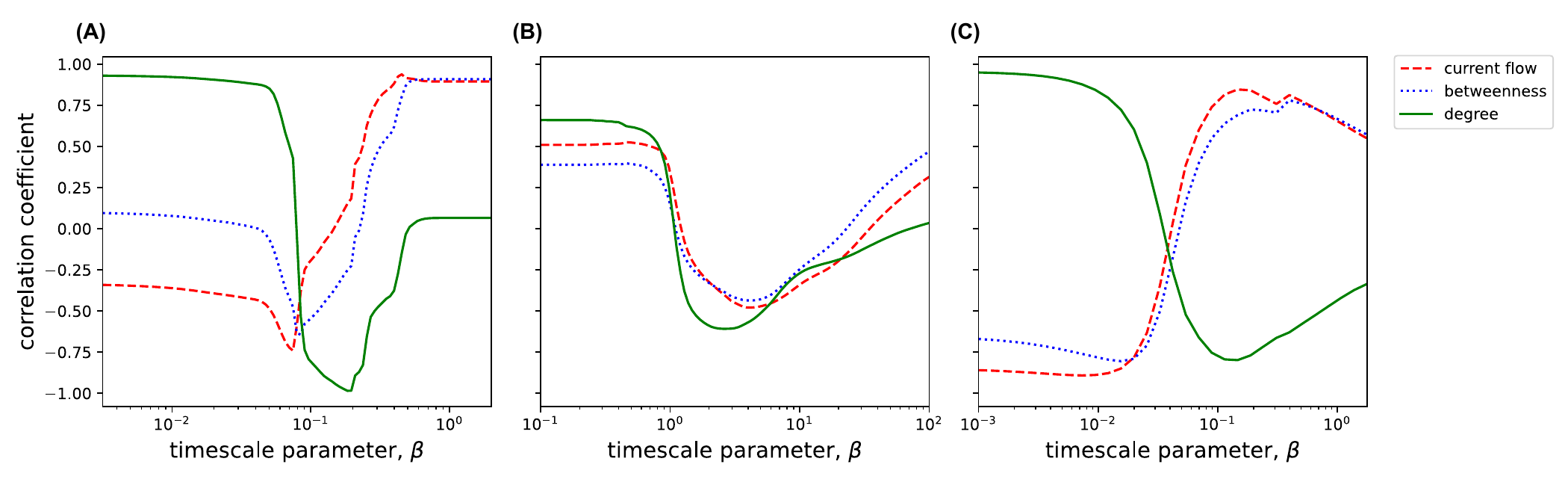}
  \vspace{-.2cm}
    \caption{{{\bf Comparison of Algorithm~\ref{alg:approximate} to other edge centralities.}
%
    \textbf{(A)}~For the U.S. Senate voting-similarity network from Sec.~\ref{sec:congress_network}, we plot the Pearson correlation coefficient between rankings according to Algorithm~\ref{alg:approximate} with varying $\beta$  and three other centralities:  edge current flow (red), betweenness (blue), and a degree-based centrality (green).     
    \textbf{(B)}~Similar results but for the multimodal transportation network from Sec.~\ref{sec:transit_network}  under the choice of $\chi = 0.6$. 
    \textbf{(C)}~Similar results but for the multiplex brain network from Sec.~\ref{sec:brain_network} under the choice of $\omega = 1$.     
    }}
\end{figure}

In Fig.~\ref{fig:centrality_comparison}(A), we plot Pearson correlation coefficients relating rankings according to Algorithm~\ref{alg:approximate}   to rankings according to each of the three other centralities. Results are shown for a U.S. Senate voting-similarity network (Sec.~\ref{sec:congress_network}).
Observe for the smaller values of $\beta$ that  rankings  according to Algorithm~\ref{alg:approximate} are strongly positively correlated with rankings under   edge degree centrality  but  are uncorrelated (or weakly correlated) with rankings according to  either edge current flow and betweenness. In contrast, for the larger values of $\beta$, rankings according to Algorithm~\ref{alg:approximate} become    uncorrelated with rankings according to edge degree centrality, but they are strongly correlated with rankings according to edge betweenness and current flow.

In Fig.~\ref{fig:centrality_comparison}(B), we plot the same information as in panel (A) but we now consider   a multimodal transport network (Sec.~\ref{sec:transit_network}).
Observe that rankings according to Algorithm~\ref{alg:approximate} are weakly  correlated for most choices of  $\beta$, and these can be either positively or negatively correlated depending on   $\beta$. Interestingly, the correlation between  Algorithm~\ref{alg:approximate} and the other edge centralities are very similar at each value of $\beta$.

In Fig.~\ref{fig:centrality_comparison}(C), we again plot the same information but we now consider a multiplex brain network (Sec.~\ref{sec:brain_network}).
Observe for the smaller values of  $\beta$ that rankings according to Algorithm~\ref{alg:approximate} are strongly correlated with edge degree, but they are strongly negatively correlated with rankings according to both edge current flow and betweenness.  Further, for the larger values of $\beta$, rankings according to Algorithm~\ref{alg:approximate} are weakly negatively correlated with rankings according to edge degree, but they are strongly positively correlated with rankings according to  edge current flow and betweeness.
%

From these observations, we conclude that common notions of centrality can be similar to our proposed centrality measure that quantifies the spectral complexity of diffusion, but only for some networks and only when the diffusion timescale lies within some range of values. Importantly, our proposed VNE-based edge rankings of edges are notably different from these known rankings 
 for other values of $\beta$, particularly, mid-range values of $\beta$.
We find that the intricacies of such rankings and their correlations are generally difficult to relate to network properties, and we leave this pursuit open to future work.}

\section{Discussion}\label{sec:Discuss}
%
In this paper, we have proposed a centrality measure derived from von Neumann entropy (VNE) to identify and rank  network edges according to their impact on the spectral complexity of information diffusion over a graph.
Our work   complements prior work \cite{ghavasieh2021unraveling} that 
{uses VNE to obtain a node centrality measure and which   does not use first-order approximation theory to be scalable for large graphs}.
In Sec.~\ref{sec:Rank}, we developed a measure for the importance of each edge $(p,q)\in\mathcal{E}$ by considering how VNE would change upon its removal. We rank the edges according to these VNE perturbations, and the resulting top-ranked edge is the one such that its removal would most increase VNE (and subsequently, most increase the spectral complexity of information diffusion at timescale $\beta$).
Our approach complements existing centrality measures  that are related to information spreading  
\cite{freeman1977set,katz1953new,estrada2009communicability,estrada2010network,kempe2003maximizing,kitsak2010identification,lehmann2018complex} but which do not consider the spectral complexity of information diffusion, as measured by VNE.
Because the  direct computation of   VNE perturbations are computationally expensive (or infeasible) for large-scale graphs, in Sec.~\ref{sec:Rank2} we developed an approximate ranking of edges based on first-order approximation theory that can efficiently predict  these VNE perturbations. That said, we note that the computational challenges are not fully resolved for VNE-based analyses of graphs. First, VNE requires one to compute a full set of matrix eigenvalues, which has a general scaling of $\mathcal{O}(N^3)$ for graphs with $N$ nodes. In principle, this poor computational scaling could be mitigated by incorporating spectral approximation techniques such as those that rely on message passing \cite{cantwell2019loops},
subgraph motifs \cite{preciado2010local,preciado2012moment}, kernel polynomial methods \cite{weisse2006kernel,dong2019network}, random matrix theory 
\cite{peixoto2013eigenvalue}, {and randomized numerical linear algebra \cite{martinsson2020randomized}.}


Also, for our goal of approximating how VNE is perturbed due to structural modifications to a network, it may be practical to approximate VNE perturbations using only a subset of eigenvalue perturbations, alleviating the need to compute all eigenvalues. That said, the number of  eigenvalues required to maintain some level of approximation accuracy would vary greatly depending on the diffusion timescale $\beta$ that is considered. For example, when $\beta$ is very large, one would expect that $H'(0)$ given by Eq.~\eqref{eq:perturbation} could be accurately approximated using only a few of the smallest eigenvalues $\lambda_i$ (i.e., since $e^{-\beta \lambda_i}$ would be negligible for larger eigenvalues). Here, we intentionally explored a wide range of timescales   $\beta$, which can also be computationally expensive. Thus, it may also be beneficial to develop methods that explore the $\beta$-parameter space more systematically by incorporating, e.g., one's external knowledge about some dynamical, structural, or preferential criteria. 
{One possible direction would be to  leverage spectral theory for  block-structure matrices including  those arising  for multiplex  networks \cite{gomez2013diffusion,sole2013spectral,taylor2017eigenvector,taylor2019tunable,taylor2020multiplex}.
}

Despite these challenges, we utilized our approximation theory for perturbed diffusion-kernel VNE to study the entropic importance of edges via three case studies with empirical networks to investigate structural/dynamical mechanisms that can systematically influence the entropic importance of edges. In Sec.~\ref{sec:congress_network}, we studied a network encoding  voting similarity  in the 117th U.S. Senate and studied the importance of interparty edges (i.e., those that connect across the two large-scale communities that result from party polarization). In Sec.~\ref{sec:transit_network}, we studied  a multimodal transportation network and studied the importance of edges representing   metro lines as opposed to those representing roads. In Sec.~\ref{sec:brain_network}, we studied a   multiplex brain network and studied the importance of interlayer edges that couple network layers versus intralayer edges that connect nodes within a particular layer. In each of these case studies, we found that   the edges that are deemed to be the most important drastically change by considering different diffusion timescales (i.e., different values of the   parameter $\beta$).

Notably, our proposed techniques for measuring the importance of edges with respect to VNE is expected to have many applications beyond our three cases studies. As an entropy-based measure, we expect our techniques to help reveal new ways to leverage information theory to study diverse types of biological, social, and physical  systems. Because diffusion-kernel VNE has a well-defined interpretation for the diffusion equation [see Eq.~\eqref{eq:heat}], it can also contribute to the related literature on Laplacian-based algorithms for data analysis and machine learning  \cite{belkin2001laplacian,coifman2006diffusion,kloster2014heat,jeub2015think,bronstein2017geometric}. As one example, our ranking of edges could contribute towards the development of Laplacian-based  algorithms for graph sparsification \cite{spielman2011graph} as well as link prediction \cite{kunegis2009learning,torres2020glee}. {However, there are limiting cases for our proposed technique such as signed networks and directed networks, where the spectra can take on complex values for which VNE is not yet defined.  In addition, since our choice of VNE is defined by the diffusion process, it may be inappropriate to apply to networks with disconnected components, since diffusion can be trapped onto a component depending on the initial condition.}

{That said,} our investigation of the multiscale aspects of entropic importance in this paper (i.e., whereby the most important edges depend sensitively on the diffusion timescale $\beta$ that is considered) does also open up new   lines of scientific inquiry. For instance, we observed `ranking regimes' in which the entropic importances of edges remain relatively insensitive to $\beta$ as well as `regime transitions' in which rankings drastically change as $\beta$ varies. (See \cite{taylor2017eigenvector,taylor2019tunable} for further discussion on mutliscale centrality regimes and transitions.) Our case studies explored three structural scenarios in which such behavior can arise, and it would be beneficial to develop a deeper understanding in future research. Moreover, while we focused herein on a formulation of VNE that is related to information diffusion, our methods could be extended to other applications by considering other dynamical systems and their related matrices. For example, by considering a normalized Laplacian matrix $\tilde{{\bf L}} = {\bf D}^{-1}{\bf L}$, VNE-based methods  would  more appropriately describe information theory related to continuous-time random walks \cite{de2016spectral} as opposed to  diffusion. One could also consider spectral entropies relating to adjacency matrices and non-backtracking matrices \cite{alon2007non}.  {However, one benefit from defining VNE according to the diffusion equation is that it has a closed-form, analytical solution using spectral theory, and so spectral perturbation theory can be used as it is in this paper.  Extending to different dynamical processes may pose a challenge  if no closed-form analytical solution exists.  In such situations, methods such as trace estimators \cite{ubaru2017estimator} might help.}
Another direction could involve using Hodge Laplacian matrices \cite{schaub2020random,ziegler2022hodge}
to extend VNE and VNE-related centrality measures to simplicial complexes (which are a higher-order generalization of graphs). Common to all of  these scenarios, it would be beneficial to consider how structural modifications impact VNE, how these VNE perturbations can be used for centrality analysis, and how spectral perturbation theory can support the development of computationally efficient methodologies.

~

~

\appendix

\section{Proof of Theorem~\ref{def:gen_FOP}} \label{app:proof_vne}
%
Equation~\eqref{eq:partials} establishes that the first-order derivative of VNE can be expanded in terms of partial derivatives
$$
H'(0) = \left. \sum_{i,j} \dfrac{\partial h}{\partial f_i}\dfrac{\partial f_i}{\partial \lambda_j}  \dfrac{\partial \lambda_j}{\partial \epsilon} \right|_{\epsilon=0},
$$    
where we define VNE as $h({\bf f}) = -\sum_i f_i(\vec{\lambda}) \log_2(f_i(\vec{\lambda}))  $,
Our proof only requires that we solve for these terms. It was shown in \cite{li2018network} that the first terms are given by
%
\begin{align}
\dfrac{\partial h}{\partial f_i} &=  -\log_2\left(f_i( \vec{\lambda})\right)  - \frac{1}{\ln\left(2\right)}.
\end{align}
Considering the diffusion-kernel VNE given by 
$$
f_i(\vec{\lambda}) = \frac{e^{-\beta\lambda_i }}{\sum_k e^{-\beta\lambda_k }},
$$
we  obtain the derivatives $\partial f_i / \partial \lambda_j$ separately for $j=i$ and $j\not=i$. 
For $j\not=i$, we find
\begin{align}\label{eq:i_j}
\dfrac{\partial f_i}{\partial \lambda_j}  
&=
\frac{ \left( \sum_k e^{-\beta\lambda_k}\right) (0) - (e^{-\beta\lambda_i}) \left(-\beta e^{-\beta\lambda_j}  \right)}
{\left[\sum_k e^{-\beta\lambda_k}\right]^2}\nonumber\\
&=
0 + \beta  \dfrac{e^{-\beta\lambda_i}}{ \sum_k e^{-\beta\lambda_k} } \dfrac{e^{-\beta\lambda_j}}{ \sum_k e^{-\beta\lambda_k} } \nonumber\\
&= 
 \beta   f_i( \vec{\lambda}) f_j( \vec{\lambda}) .
\end{align}
For $j=i$, we find
\begin{align}\label{eq:i_i}
\dfrac{\partial f_i}{\partial \lambda_i}  
&=
\frac{ \left( \sum_k e^{-\beta\lambda_k}\right) (-\beta  e^{-\beta\lambda_i}) - (e^{-\beta\lambda_i}) \left(-\beta e^{-\beta\lambda_i}  \right)}
{\left[\sum_k e^{-\beta\lambda_k}\right]^2}\nonumber\\
&=
-\beta \left[\frac{ \left( \sum_k e^{-\beta\lambda_k}\right) (  e^{-\beta\lambda_i})}{\left[\sum_k e^{-\beta\lambda_k}\right]^2} - \frac{(e^{-\beta\lambda_i}) \left(  e^{-\beta\lambda_i}  \right)}
{\left[\sum_k e^{-\beta\lambda_k}\right]^2}\right]\nonumber\\
&=
- \beta  f_i( \vec{\lambda}) \left( 1- f_i( \vec{\lambda})   \right) .
\end{align}
Combing these results, we obtain
\begin{align}
\sum_j \dfrac{\partial f_i}{\partial \lambda_j}\dfrac{\partial \lambda_j}{\partial \epsilon}  
&= -\beta f_i( \vec{\lambda})  \left[\left(1 - f_i( \vec{\lambda}) \right) \frac{\partial \lambda_i}{\partial \epsilon} 
- \sum_{j \neq i}^{N} f_j( \vec{\lambda}) \frac{\partial \lambda_j}{\partial \epsilon}   \right]\nonumber\\
&= -\beta f_i( \vec{\lambda})  \left[ \frac{\partial \lambda_i}{\partial \epsilon}   
- \sum_{j =1}^{N} f_j( \vec{\lambda}) \frac{\partial \lambda_j}{\partial \epsilon}  \right].
\end{align}
Finally,  we  use that
\begin{equation}
\left. \dfrac{\partial \lambda_i}{\partial \epsilon} \right|_{\epsilon=0} = \lambda_i' (0) =({\bf u}^{(i)})^T \Delta  {L} {\bf u}^{(i)}
\end{equation}
is given by Eq.~\eqref{eq:eigen_perturb_b}.
We combine these terms to recover Eq.~\eqref{eq:perturbation}.


\section{Numerical validation of Theorem ~\ref{def:gen_FOP}} \label{app:valid}
In Fig.~\ref{fig:pert_valid}, we provide numerical validation for Thm.~\ref{def:gen_FOP}. First, we computed  the  VNE $h(L)$ of a random graph that was sampled from the Erd\H{o}s-R\'enyi $G_{NM}$  model with $N=100$ nodes and $M=2000$ edges.   In Fig.~\ref{fig:pert_valid}(A), we plot the perturbed VNE $h(L+\Delta L)$ (blue dotted curve) after removal of $k$ edges with $k\in\{0,\dots,20\}$.  Edges were selected for removal uniformly at random. For comparison, the red dashed curve depicts our first-order approximation $h(L+\Delta L) \approx   h(L) + H'(0)$ with $H'(0)$ given by Eq~\eqref{eq:perturbation} {with $\Delta L$ defined according to Proposition \ref{cor:1edge}}. Note  {for the deletion of several edges that 
$\Delta L=-\sum_{(p,q)}\Delta L^{(pq)}$, where the sum is taken over the set of removed edges  \cite{taylor2016synchronization}.}
{Observe in   Fig.~\ref{fig:pert_valid}(A)} that, as expected, the first-order approximation is more accurate when the perturbation is smaller (i.e., smaller $k$).
%

\begin{figure}[t]
\centering
\includegraphics[width=1\linewidth]{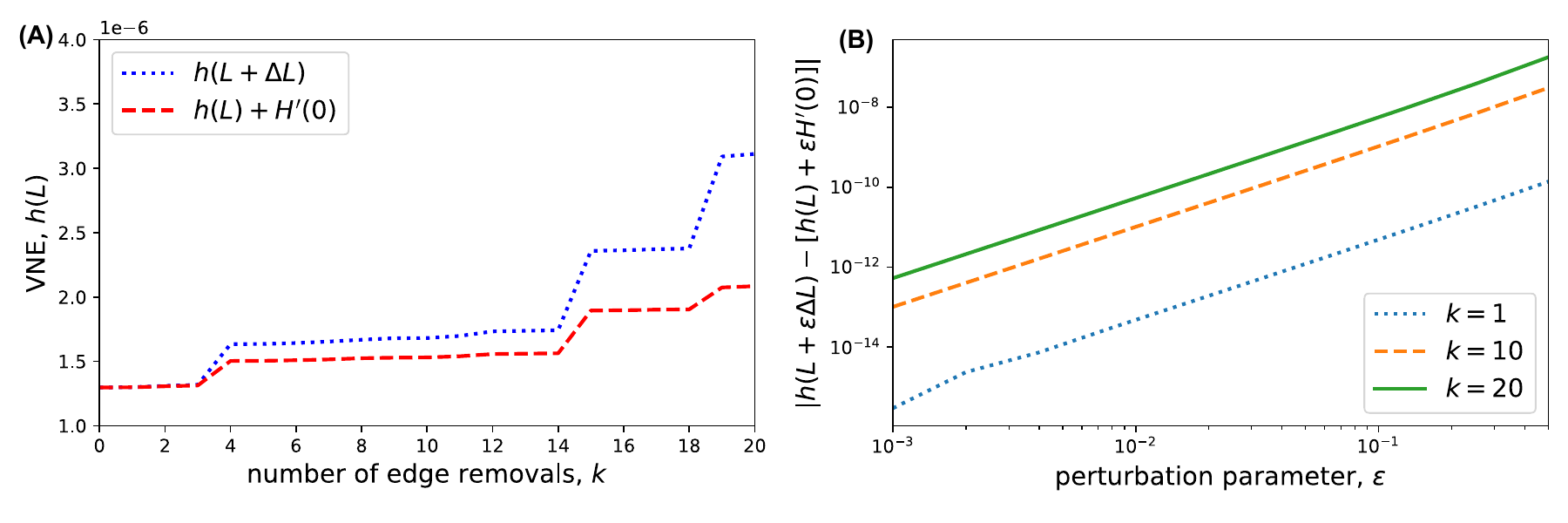}
\vspace{-.9cm}
\caption{ {\bf Numerical validation of first-order perturbation  theory for   diffusion-kernel VNE.}
{\bf (A)}~Comparison of   $h(L+  \Delta L)$ and our first-order prediction, $h(L) + H'(0)$, with $H'(0)$ given by Thm.~\ref{def:gen_FOP},  for a perturbed graph. We   consider an Erd\H{o}s-R\'enyi $G_{NM}$ random   graph with $N = 100$ nodes and $M = 1500$ unweighted edges, and each perturbation matrix $\Delta L$   encodes the removal of $k$ edges {as described in \cite{taylor2016synchronization} (which generalized Lemma \ref{cor:spec_pert} to edge sets).}
Observe that the first-order approximation becomes less accurate for larger perturbations (i.e., larger $k$).
{\bf (B)}~Numerical validation that the approximation error $E(\epsilon) \equiv   |h(L+\epsilon\Delta L) - [h(L) + \epsilon H'(0)]|$ vanishes in the limit of small $\epsilon$ as $\mathcal{O}(\epsilon^2)$, that is, $\log(E(\epsilon) ) \varpropto 2\log(\epsilon)$. 
We plot $E(\epsilon)$ for the same network as in panel (A) in a log-log scale and compute a least-squares linear fit to obtain empirically measured slopes
$\{2.042, 2.022, 2.036\}$, respectively, for three choices of $k\in\{ 1, 10, 20\}$. These are all nearly equal to the predicted slope of 2, supporting our claim that Thm.~\ref{def:gen_FOP} has second-error error.
}
\label{fig:pert_valid}
\end{figure}

In Fig.~\ref{fig:pert_valid}(B), we support our claim that the approximation $h(L+\epsilon \Delta L) \approx   h(L) + \epsilon H'(0)$ is first-order accurate by showing that the approximation error  has a second-order scaling behavior. Specifically, we plot the approximation error $E(\epsilon) \equiv |h(L+\epsilon \Delta L) - [   h(L) + \epsilon H'(0)]|$ versus $\epsilon$ and find  that the error decays as $\mathcal{O}(\epsilon^2)$. Note that in a log-log scale, this scaling corresponds to a linear relationship $\log( E(\epsilon) )\varpropto 2 \log(\epsilon)$. In Fig.~\ref{fig:pert_valid}(B), we plot $\log( E(\epsilon) )$ versus $ \log(\epsilon)$ for three choices of $k\in\{1,10,20\}$, and  we computed a least-squares linear fit to each curve. We empirically observed the slopes to be approximately 2.014, 2.022, and 2.026 for these three lines, respectively. Because these are all very close to our analytically predicted slope of 2, we can be confident that our first-order approximation does in fact have second-order error.

{
\section{Further Comparison of Algorithms~\ref{alg:actual} and \ref{alg:approximate}} \label{app:valid2}
Here, we extend the results shown in Fig.~\ref{fig:runtime}(B) by repeating the experiment for two empirical networks: a network encoding voting-pattern similarity among the  U.S. Senate and a multiplex network derived from fMRI data. See Secs.~\ref{sec:congress_network} and \ref{sec:brain_network}, respectively, for their descriptions. We also highlight that we were unable to  conduct this experiment for the transit network described in Sec.~\ref{sec:transit_network}, since we found it infeasible to compute Algorithm~\ref{alg:actual} for that network. Specifically, the transit network contains approximately 22 times more nodes than the other two empirical networks. Since the runtime of Algorithm~\ref{alg:actual} scales like $MN^3$ for $M$ edges and $N$ nodes, we estimate that  Algorithm~\ref{alg:actual} for the transit network could take up to 10,000 times longer than for the other two empirical networks.  (Recall that the poor scaling of Algorithm~\ref{alg:actual} was our main motivation for developing Algorithm~\ref{alg:approximate}.)

\begin{figure}[h]
    \label{fig:compare2}
    \centering
    \includegraphics[width=\linewidth]{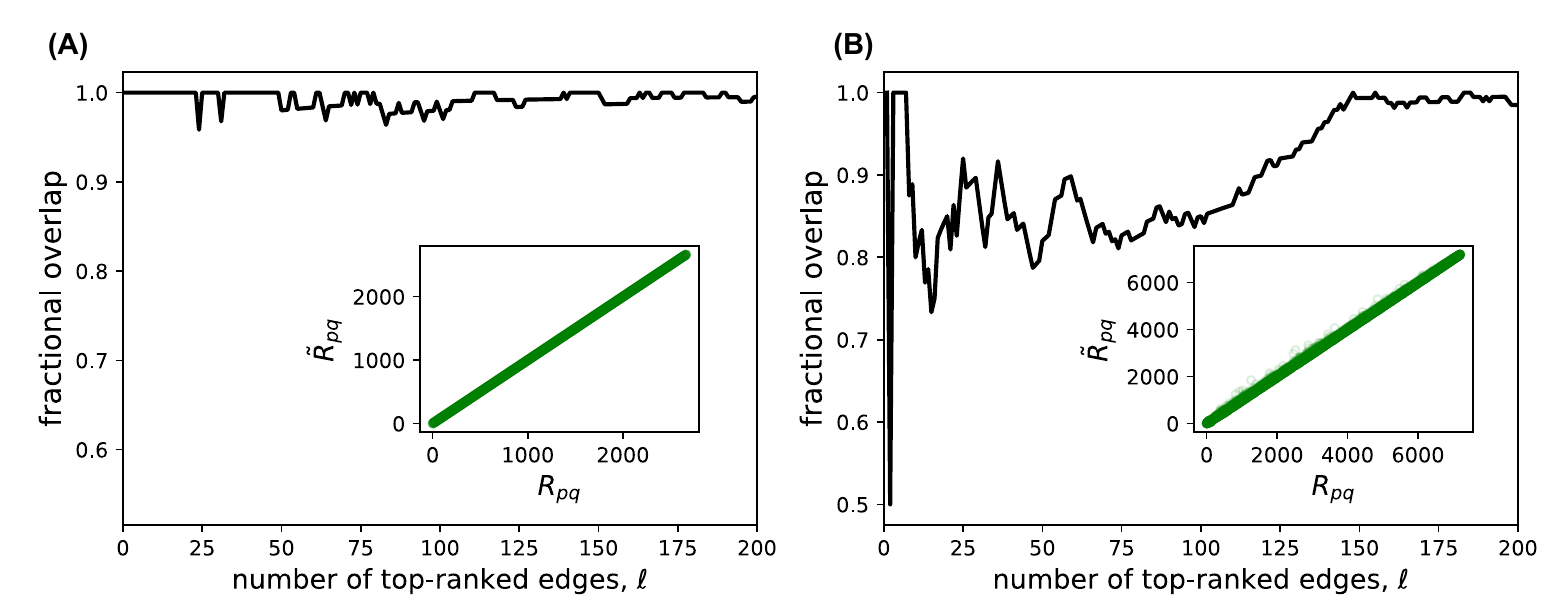}
    \vspace{-.3cm}
    \caption{{ {\bf Comparison of Algorithms~\ref{alg:actual} and \ref{alg:approximate} for   U.S. Senate    and multiplex brain networks.}
        Extending the results in Fig.~\ref{fig:runtime}(B), we now compare the two algorithms 
        for two empirical networks:  \textbf{(A)}~a U.S. Senate voting network and \textbf{(B)}~a multiplex brain network with interlayer coupling $\omega = 1$. See Secs.~\ref{sec:congress_network} and \ref{sec:brain_network}, respectively, for their descriptions. In both panels, we  plot the fractional overlap among the top-$\ell$ ranked edges for the two algorithms with $\beta=1$ and varying $\ell$. The insets show scatter plots that directly compare the  rankings from the algorithms.
    }}
\end{figure}

In Fig.~\ref{fig:compare2}(A) and (B), we present results that are similar to those shown in Fig.~\ref{fig:runtime}(B) but which are now obtained for  two empirical networks: (A) a U.S. Senate voting network and (B) a multiplex brain network. For both networks, we plot   the fractional overlap among the top-$\ell$ ranked edges for the two algorithms with $\beta=1$ and varying $\ell$. Observe for both networks that both algorithms identify a similar set of top-ranked edges. The insets show scatter plots that directly compare  the algorithms rankings. Observe that the approximate edge rankings $\{\tilde{R}_{pq}\}$ and true edge rankings $\{{R}_{pq}\}$ are very similar for all edges in both networks. These results recapitulate our findings that were previously  discussed for Fig.~\ref{fig:runtime}(B).
}

\section{Further description of the voting similarity network in Sec.~\ref{sec:congress_network}} \label{app:con_weights}
%
Here, we describe our construction of an empirical network that encodes voting similarity among persons in the 117th U.S Senate.
We first downloaded voting-pattern data from VoteView \cite{lewis2018voteview}   to create a tensor with entries $\gamma_{ijk}=1$ when Senator $i$ and $j$ vote identically on bill $k$ and $\gamma_{ijk}=0$ otherwise. Letting $b_{ij}$ denote the number of bills in which both $i$ and $j$ vote, we define a weighted adjacency matrix with entries $\tilde{A}_{ij} = \frac{1}{b_{ij}}\sum_k \gamma_{ijk}   \in [0, 1]$ indicating for each  pair of Senators the fraction of bills on which they vote identically. Equivalently, we only sum over bills $k$ in which   both $i$ and $j$ cast a vote. Finally, we obtain a sparse network with an adjacency matrix ${\bf A}$ by applying a threshold  $\tau$ to   $\tilde{{\bf A}}$. That is, we define
\begin{equation}
    {A}_{ij} = \begin{cases} 
                        \tilde{A}_{ij}, & \tilde{A}_{ij} > \tau\\
                        0, & \text{otherwise}.
                      \end{cases}
\label{eq:threshold}
\end{equation}
After exploring a wide range of $\tau$ values, we selected $\tau = 0.40$, that is a minimum voting-together fraction of $0.40$. The resulting graph is a  weighted voting-similarity network \cite{waugh2009party,mucha2010communities,mucha2013polarization} that is undirected, and each edge $(i,j)$ has a weight $A_{ij}>0.4$.

\begin{figure}[t]
\centering
  \includegraphics[width=0.8\linewidth]{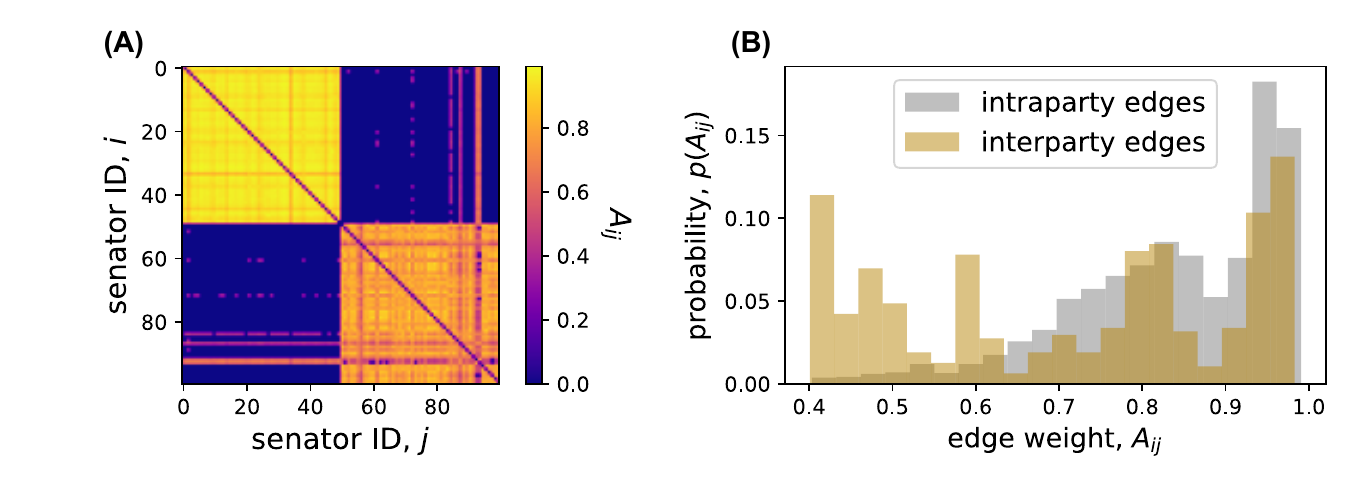}
  \vspace{-.3cm}
    \caption{{\bf Network encoding voting similarity among the 117th U.S. Senate.}
    \textbf{(A)}~Visualization of the adjacency matrix ${\bf A}$, where $A_{ij}$ encodes the fraction of bills in which Senators $i$ and $j$ vote identically (ignoring fractions less that $0.4$). 
    By sorting node ids by the Senator's political affiliations so that nodes $\{1,  \dots, 50\}$ are Republicans and nodes $\{51,  \dots, 100\}$ are either Democrats or independents, the resulting 2-community structure manifests as two large `blocks' in the matrix.
    \textbf{(B)}~Histogram depicts the probability distributions for edge weights $\{A_{ij}\}$, which we measure separately across intraparty and interparty edges.
    }
\label{fig:congress_weights}
\end{figure}

We note that there are other ways to construct voting-similarity networks, including, e.g., by restricting attention to non-unanimous roll calls \cite{waugh2009party} as opposed to considering all bills (as we have done).
Note that for convenience, we removed Kamala Harris and Kelly Loeffler from the data set because they  participated in the 117th Senate for less than a month before resigning or being replaced.  We also relabeled the independents to be Democrats since the current independents caucus with the Democrats and because, for simplicity, we would like to focus this experiment on a network having two communities of approximately equal size. 

Recall from Sec.~\ref{sec:congress_network}  that we would like to compare the rankings $\tilde{R}_{pq}$ of edges   within and between communities, i.e., intraparty and interparty edges. As such, we selected $\tau=0.4$ so that the resulting network has few edges between the two communities. In this case, we found 474 interparty edges and 2182 intraparty edges. This occurs due to party polarization; that is,  the vast majority of edges are intraparty edges that connect Senators having the same party affiliation, whereas relatively few  edges are interparty edges. Since most Senators are Republicans or Democrats (and the current independents largely act as Democrats in their roll call voting), the resulting network has two large communities. As shown in Fig.~\ref{fig:congress_weights}(A), these   communities manifest as 'blocks' in the adjacency matrix ${\bf A}$. In Fig.~\ref{fig:congress_weights}(B), we depict the distributions of edge weights $\{A_{ij}\}$, which we measure separately across intraparty and interparty edges.


\section{Further description of the multiplex brain network in Sec.~\ref{sec:brain_network}} \label{app:brain_weights}
Here, we provide additional discussion for our experiments with multiplex brain networks.  We first downloaded  adjacency matrices that encode network layers from \cite{brain_data}. See \cite{guillon2017loss} for a paper discussing this data and related experiments. Each adjacency matrix encodes spectral coherence  for a distinct particular frequency band.
Diagonal entries in the matrices are set to zero to ensure that there are no self-edges.  
For each network layer,  nodes represent brain regions and weighted edges represent coherence values.
The full data includes six network layers for each person, and there are data for 25 healthy persons and 25 persons with Alzheimers disease. We focused on two network layers that represent the frequency bands of $\delta$ and $\theta$ waves by examining two adjacency matrices  ${\bf A}^{(\delta)}$ and ${\bf A}^{(\theta)}$ for the first person in the data set. 

In Fig.~\ref{fig:brains}(A), we plot histograms of the coherence values ${ A}^{(\delta)}_{ij}$ and ${ A}^{(\theta)}_{ij}$ encoded in these two matrices. Because we would like to study sparse networks, we chose a threshold of $\tau=1.52$ and remove edges having a weight less than $\tau$. Observe that the two distributions $p({ A}^{(\delta)}_{ij})$ and $p({ A}^{(\theta)}_{ij})$ are similar, and so the two `sparsified' network layers have similar weight distributions and a similar number of edges.

In Fig.~\ref{fig:brains}(B), we present a plot that is similar to Fig.~\ref{fig:brain_fig}(B). Here, we plot the mean ranking $\langle \tilde{R}_{pq}\rangle_{inter}$ of interlayer edges versus timescale parameter $\beta$ for the coupling strength   $\omega = 1$. The different curves represent the 25 different healthy human brains in the study, and the dashed black curve gives the mean curve. Observe that all curves are qualitatively similar, suggesting that the effect of $\beta$ on edges' entropic importance is robust across the brain data.

\begin{figure}[h]
\centering
  \includegraphics[width=1\linewidth]{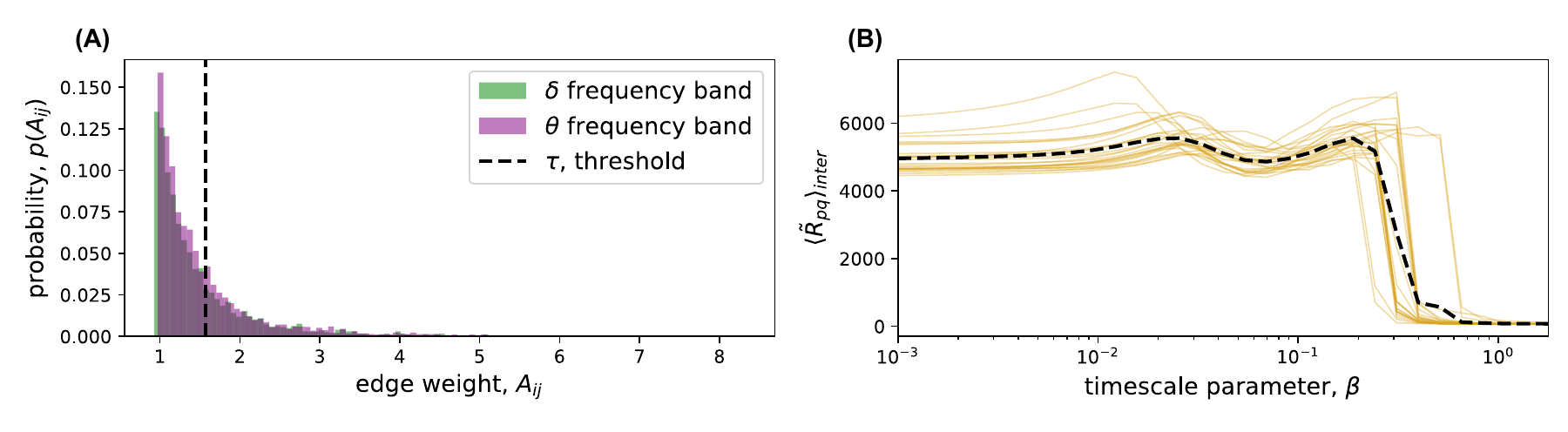}
  \vspace{-.8cm}
    \caption{{\bf Multiplex network encoding spectral coherence among brain regions.}
    \textbf{(A)}~Histograms depicting the probability distributions  edges weights $\{A_{ij}^{(\delta)}\}$ and $\{A_{ij}^{(\theta)}\}$ for two network layers encoding spectral-coherence relations among the fMRI activity brain regions at two frequency bands.
    \textbf{(B)}~We plot the mean ranking $\langle \tilde{R}_{pq}\rangle_{inter}$ of interlayer edges versus timescale parameter $\beta$ the coupling strength   $\omega = 1$. Different curves represent the 25 healthy human brains   in the data set \cite{brain_data,guillon2017loss}. The black dashed line depicts the mean curve.
    }
\label{fig:brains}
\end{figure}


\bibliographystyle{siam}
\bibliography{Entropy_bib}

\end{document}